\documentclass[preprint,pre,twocolumn,10pt]{revtex4}%
\usepackage{amsfonts}
\usepackage{amsmath}
\usepackage{amssymb}
\usepackage{graphicx}%
\setcounter{MaxMatrixCols}{30}
\providecommand{\U}[1]{\protect\rule{.1in}{.1in}}
\newtheorem{theorem}{Theorem}

\newtheorem{definition}[theorem]{Definition}

\newtheorem{proposition}[theorem]{Proposition}

\newenvironment{proof}[1][Proof]{\noindent\textbf{#1.} }{\ \rule{0.5em}{0.5em}}
\begin{document}
\preprint{UATP/1801}
\title{Hierarchy of Relaxation times and Residual Entropy: A Nonequilibrium Approach}
\author{P. D. Gujrati}
\email{pdg@uakron.edu}
\affiliation{Department of Physics, Department of Polymer Science, The University of Akron,
Akron, OH 44325}

\begin{abstract}
We consider nonequilibrium (NEQ) states such as supercooled liquids and
glasses that are described with use of internal variables. We classify the
latter by state-dependent hierarchy of relaxation times to assess their
relevance for irreversible contributions. Given an observation time
$\tau_{\text{obs}}$, we determine the window of relaxation times that divide
the internal variables into active and inactive groups, the former playing a
central role in the NEQ thermodynamics. Using this thermodynamics, we
determine (i) a bound on the NEQ entropy and on the residual entropy, and (ii)
the nature of isothermal relaxation of the entropy and the enthalpy in
accordance with the second law. A theory that violates the second law such as
the entropy loss view is shown to be internally inconsistent if we require it
to be consistent with experiments. The inactive internal variables still play
an indirect role in determining the temperature $T(t)$, the pressure $P(t)$,
of the system, which deviate from their external values.

\end{abstract}
\date[January 24, 2018]{}
\maketitle

\section{Introduction}

Glass such as naturally occurring obsidian, pumice, etc. or man-made Venetian
glass, window glass, etc. is a well-known class of material that has captured
our fascination forever. We can now make a defect-free glass in the laboratory
for a variety of scientific and technological applications. Crudely speaking,
it is an almost solid-like amorphous material that possesses no long range
atomic order and, upon heating, \emph{gradually} softens as it turns into its
molten state (also known as the supercooled liquid) as it passes through the
glass transition region normally denoted by a suitable chosen single
temperature $T_{\text{g}}$ in this region
\cite{Goldstein-Ann,Nemilov-Book,Debenedetti,Gutzow-Book}. For the purpose of
this article, a glass is treated merely as a \emph{nonequilibrium} (NEQ) state
of matter, which can be made quite homogeneous so to a good approximation it
can be treated as a thermodynamic system that is in \emph{internal
equilibrium} (IEQ) but not in \emph{equilibrium} (EQ)\ as explained later. (At
present, it suffices to say that the entropy in an IEQ state is a state
function of its state variables that now include some NEQ state variables
(commonly known as internal variables)
\cite{Goldstein-Ann,Nemilov-Book,Debenedetti,Gutzow-Book} besides those needed
to specify EQ states; see also \cite{Note-IEQ,Simon,Landau}.) This means that
a glass will exhibit relaxation as it strives to come to equilibrium. The
relaxation time is known to be large enough close to $T_{\text{g}}$ that at
much lower temperatures, one can usually treat a glass to be in a almost
\emph{frozen} state over experimental time scale $\tau_{\text{obs}}$, the time
period over which successive observations are made. We refer the reader to an
excellent monograph by Debenedetti \cite{Debenedetti} on these issues. We will
primarily focus on the thermodynamics of glasses and supercooled liquid in
this work and treat them as NEQ states. Therefore, our discussion will mostly
consider a NEQ system, which we denote by $\Sigma$ in an extensively large
medium $\widetilde{\Sigma}$ as shown in Fig. \ref{Fig.System}.

\begin{definition}
As we will not consider a system in isolation in this work, we will always use
EQ\ or "equilibrium" to mean "equilibrium with respect to the medium
$\widetilde{\Sigma}$." We will not reserve EQ for the entire system only. We
will also use it for a part of the system, part of the state variables, or
part of the degrees of freedom such as vibrational degrees, of the system, if
they are in equilibrium with $\widetilde{\Sigma}$. On the other hand, we will
reserve the use of IEQ for the entire system; see also \cite{Note-IEQ}.
\end{definition}

It is a well-known fact that in glasses, the vibrational modes come to
equilibrium very fast, even though the glass is out of equilibrium. Similarly,
in a sinusoidal variation of $T$, some degrees of freedom would equilibrate
after a cycle; others would not and would control the temporal behavior of the
system. It seems natural that the sinusoidal variation would give rise to a
distribution of relaxation times. Thus, in general, one of the most important
consequences of the rate of variation of the external stimuli such as the
temperature or pressure is the possibility that the state of the system may be
so far away from equilibrium that the dynamics becomes too complex, involving
multiple relaxation time scales $\tau_{0},\tau_{1},\tau_{2},\cdots$, in
supercooled liquids \cite{Gotze,Cummins,Debenedetti,Goldstein-Ann}. The
relaxation time is defined as the time required for the corresponding
dynamical variable to come to equilibrium with the medium; see Eq.
(\ref{Relaxation-Def}) for the proper definition of the relaxation time. It
should be emphasized that this interpretation of the relaxation time is
dictated by the experimental setup but does not depend on any particular
mathematical form of the relaxation. An interplay between $\tau_{\text{obs}}$
and relaxation times $\tau_{k}$'s becomes crucial in determining the
thermodynamics of the system and plays a major role in our discussion here. In
fact,\ one of the following cases for a given $\tau_{k}$ will be usually
encountered in experiments:

\begin{enumerate}
\item[Relax1] $\tau_{k}<<\tau_{\text{obs}}$. In this situation, the $k$th
relaxing dynamical variable has equilibrated and does not have to be accounted
for in the NEQ thermodynamics.

\item[Relax2] $\tau_{k}\simeq\tau_{\text{obs}}$. In this situation, the $k$th
dynamical variable will continue to relax towards equilibrium during
$\tau_{\text{obs}}$ and must be accounted for as the system approaches equilibrium.

\item[Relax3] $\tau_{k}>>\tau_{\text{obs}}$. In this situation, the $k$th
dynamical variable will not fully relax and will strongly affect the behavior
of the system. The corresponding dynamical variable is said to be "frozen-in"
over $\tau_{\text{obs}}$.
\end{enumerate}

When there are several relaxation times, it is possible that different
$\tau_{k}$'s will correspond to different cases above. Thus, care must be
exercised in dealing with different relaxation times. The need for \ such a
care has been recognized in vitrification for a long time \cite{Wilks}.
Relaxation is a universal phenomenon when a system drives itself towards a
more stable state such as an EQ state. In liquids or glasses, relaxations
involving changes of the atomic or molecular positions are generally known as
\emph{structural relaxations} \cite{Scherer}. Recent experimentation advances
have made it possible to directly measure these relaxation processes at the
molecular level simultaneously \cite{Ediger}. At sufficiently low
temperatures, the characteristic time for structural relaxations becomes
comparable to the time scale of a macroscopic observation $\tau_{\text{obs}%
}\sim$ $100$ s. For shorter time scales, the supercooled liquid (SCL) exhibits
solid-like properties, while for longer times, it shows liquid-like
properties. Even the dynamics in these cases is not so trivial but has been
investigated for a long time \cite{Gotze,Cummins,Debenedetti,Goldstein-Ann}
with tremendous success. The glass transition being a "NEQ transition," its
description will require extensive \emph{internal variables}, collectively
denoted by a vector $\boldsymbol{\xi}$ that are independent of the set
$\mathbf{X}$ of extensive \emph{observables} ($E,V,N,\cdots$) \cite{Note1}
whenever the system is out of equilibrium
\cite{Maugin,Gutzow-Book,Nemilov-Book,Gujrati-I,Gujrati-II,deGroot,Prigogine,Beris,Kestin,Woods,Coleman,Jou0,Meixner}%
. We denote their collection by $\mathbf{Z}$ in this work. The investigations
of the glass transition invariably assume that the entropy $S$ is a
\emph{state function} $S(\mathbf{Z})$\ of the state variables in the
\emph{extended state space} $\mathfrak{S}_{\mathbf{Z}}\supset\mathfrak{S}%
_{\mathbf{X}}$ spanned by $\mathbf{Z}$; here $\mathfrak{S}_{\mathbf{X}}$ is
the state space of the observables. There is a memory of the initial state and
requires the entire history of how the state is prepared to uniquely describe
the preparation. Such a memory in some cases can be described by
$\boldsymbol{\xi}$. One example is residual stresses \cite{Withers}: if
particle configurations in a glass cannot fully relax to equilibrium, some of
the stresses that build up during flow in the melt persist in the glass; these
stresses cannot be captured by $\mathbf{X}$. We will say that such a state is
an incompletely described state in terms of $\mathbf{X}$ but a completely
described state in terms of $\mathbf{Z}$. In contrast, the EQ state
\textsf{M}$_{\text{eq}}(\mathbf{X})$ is a completely (\textit{i.e., }uniquely)
described state by $\mathbf{X}$ and has no memory of the initial state. This
means that in equilibrium, $\boldsymbol{\xi}$ is no longer independent of
$\mathbf{X}$.

The consideration of dynamics\ resulting from the simple connectivity of the
sample (also known as the microstate or phase) space has played a pivotal role
in developing the kinetic theory of gases \cite{Boltzmann0,Lebowitz}, where
the interest is at high temperatures
\cite{Landau,Gujrati-Residual,Gujrati-Symmetry,Gujrati-Poincare}. As dynamics
is very fast here, it is well known that the ensemble averages agree with
temporal averages. However, at low temperatures, where dynamics becomes
sluggish as in a glass \cite{Gujrati-book,Palmer,Debenedetti,Jackel}, the
system can be \emph{confined} into \emph{disjoint} components. The confinement
occurs under NEQ conditions, when the observational time scale $\tau
_{\text{obs}}$\ becomes shorter than the equilibration time $\tau_{\text{eq}}$
such as in glasses, whose behavior and properties have been extensively
studied. These components are commonly known as \emph{basins} in the energy
landscape picture \cite{GoldsteinLandscape,GujratiLandscape}. The
\emph{entropy of confinement} at absolute zero is known as the \emph{residual
entropy} and can be observed in glasses or disordered crystals; see below.%

\begin{figure}
[ptb]
\begin{center}
\includegraphics[
height=1.5229in,
width=3.1298in
]%
{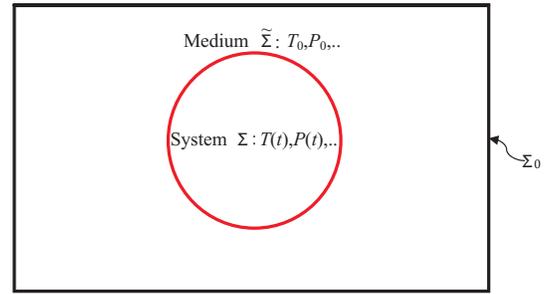}%
\caption{An isolated system $\Sigma_{0}$ consisting of the system $\Sigma$ in
a surrounding medium $\widetilde{\Sigma}$. The medium and the system are
characterized by their fields $T_{0},P_{0},...$ and $T(t),P(t),...$,
respectively, which are different when the two are out of equilibrium. }%
\label{Fig.System}%
\end{center}
\end{figure}

The existence of a nonzero residual entropy does not violate Nernst's
postulate, as the latter is applicable only to EQ states \cite[Sect.
64]{Landau}. The observation of residual entropy is very common in Nature.
Indeed, Tolman \cite[Sect. 137]{Tolman} devotes an entire section on this
issue for crystals in his seminal work, while Sethna provides an illuminating
discussion for glasses \cite[Sect. 5.2.2]{Sethna}. In addition, the existence
of the residual entropy has been demonstrated rigorously for glasses by Pauli
and Tolman \cite{Pauling} and for a very general spin model by Chow and Wu
\cite{Chow}; see references in these works for other cases where the residual
entropy is shown to exist rigorously. The numerical simulation carried out by
Bowles and Speedy for glassy dimers \cite{Speedy} also supports the existence
of a residual entropy. We refer the reader to consult various publications
\cite{Goldstein,Kozliak,Tolman}. Experiment evidence for a nonzero residual
entropy is abundant as discussed by several authors \cite[among others]%
{Gujrati-Residual,Giauque,Giauque-Gibson,Jackel,Pauling-ice,Nagle,Isakov,Berg,Speedy,Bestul}%
; various textbooks \cite{Gutzow-Book,Nemilov-Book} also discuss this issue.

We introduce useful notation and concepts in the next section. In the
following section, we introduce the concept of internal equilibrium (IEQ)
states for which the entropy is a state function in the extended state space
$\mathfrak{S}_{\mathbf{Z}}$.

\begin{definition}
As we are not interested in ordering phenomena (such as crystallization), we
define a NEQ state with respect to an EQ state that is also \emph{disordered},
\textit{i.e. }with respect to SCL.\textit{ This is formally done by
considering only disordered configurations and discarding all ordered
configurations in our discussion. We warn the reader that this is different
from the conventional approach in which the equilibrium state is always taken
to be the perfectly crystalline state. This point should not be forgotten. }We
then discuss the nature of the nonequilibrium state variables in
$\mathfrak{S}_{\mathbf{Z}}$ in Proposition 1. The affinity $\mathbf{A}$
corresponding to $\boldsymbol{\xi}$ is defined so that it vanishes in SCL, the
equilibrium state in our approach.
\end{definition}

The concept of a hierarchy of relaxation times is introduced in Sec.
\ref{Sec-Hierarchy}, which forms a central part of the paper. A given
$\tau_{\text{obs}}$ determines a particular time window, which provides a
justification of Proposition 1. We find that internal variable
$\boldsymbol{\xi}_{\text{E}}$ that has equilibrated play no role
thermodynamically since their affinity vanishes during $\tau_{\text{obs}}$. In
Sec. \ref{Sec_General_Consideration}, we discuss the first law in terms of the
new notation, identify the irreversible work, and the IEQ thermodynamics to be
used in the next two sections on the entropy bound in vitrification\ and the
residual entropy (Sec. \ref{Sec-Vitrification}) and on the properties of the
isothermal relaxation (\ref{Sect_Relaxation}). In Sec. \ref{Sec-Fictive}, we
find that $\boldsymbol{\xi}_{\text{E}}$ still indirectly affect thermodynamics
as it is required to have a thermodynamic temperature, pressure, etc. for the
system. The final section contains a brief discussion of the results.

\section{Notation\label{Sec-Notation}}

Below is a brief introduction to the notation and the significance of various
modern terminology \cite{deGroot,Prigogine} for readers who are unfamiliar
with them. As usual, $\Sigma$ and $\widetilde{\Sigma}$ form an isolated system
$\Sigma_{0}$. Extensive quantities associated with $\widetilde{\Sigma}$ and
$\Sigma_{0}$\ carry a tilde $\widetilde{\square}$ and a suffix $0$,
respectively. As $\widetilde{\Sigma}$ is very large compared to $\Sigma$ and
is in equilibrium, all its conjugate fields $T_{0},P_{0},$ etc. carry a suffix
$0$ as they are the same as for $\Sigma_{0}$, and there is no irreversibility
in $\widetilde{\Sigma}$. Any irreversibility is ascribed to the system
$\Sigma$ \cite{deGroot,Prigogine}, and is caused by processes such as
dissipation due to viscosity, internal inhomogeneities, etc. that are internal
to the system. Quantities without any suffix refer to the system. Throughout
this work, we will assume that $\Sigma$ and $\widetilde{\Sigma}$ are spatially
disjoint and \emph{statistically} quasi-independent
\cite{Gujrati-II,Gujrati-Entropy2,Gujrati-Entropy1} so that their volumes,
masses and entropies are additive at each instant. In particular,
$d\widetilde{V}=-dV$, since $V_{0}=V+\widetilde{V}$ remains constant for
$\Sigma_{0}$. We define a quantity to be system-intrinsic (SI) quantity if it
depends only on the property of the system alone and nothing else. For
example, if $P$ is the pressure of $\Sigma$ and $P_{0}$ that of $\widetilde
{\Sigma}$, then $PdV$ is the SI work done by the system, but $P_{0}dV$ is not
as the latter also depends on $\widetilde{\Sigma}$ through $P_{0}$. However,
$P_{0}d\widetilde{V}=-P_{0}dV$ is the work done by the medium, and this work
can be identified as a medium-intrinsic (MI) quantity. Any extensive SI
quantity $q(t)$ of $\Sigma$ can undergo two distinct kinds of changes in time:
one due to the exchange with the medium and another one due to internal
processes. Following modern notation \cite{Prigogine,deGroot}, exchanges of
$q(t)$ with the medium and changes within the system carry the suffix e and i,
respectively:%
\begin{equation}
dq(t)\doteq q(t+dt)-q(t)\equiv d_{\text{e}}q(t)+d_{\text{i}}q(t).
\label{q-partition}%
\end{equation}
For $\widetilde{\Sigma}$ and $\Sigma_{0}$, we must replace $q(t)$ by
$\widetilde{q}(t)$ and $q_{0}(t)$, respectively, so that $d\widetilde
{q}(t)=\widetilde{q}(t+dt)-\widetilde{q}(t)$ and $dq_{0}(t)\doteq
q_{0}(t+dt)-q_{0}(t)$. We will assume additivity so that
\[
q_{0}(t)=q(t)+\widetilde{q}(t).
\]
For this to hold, we need to assume that $\Sigma$ and $\widetilde{\Sigma}$
interact so \emph{weakly} that their interactions can be neglected. As there
is no irreversibility within $\widetilde{\Sigma}$ , we must have $d_{\text{i}%
}\widetilde{q}(t)=0$ for any medium quantity $\widetilde{q}(t)$ and
\begin{equation}
d_{\text{e}}q(t)\doteq-d\widetilde{q}(t)=-d_{\text{e}}\widetilde{q}(t).
\label{exchange-equality}%
\end{equation}
It follows from additivity that
\begin{equation}
dq_{0}(t)\equiv dq(t)+d\widetilde{q}(t)=d_{\text{i}}q(t).
\label{Isolated-system-irreversibility}%
\end{equation}
This means that any irreversibility in $\Sigma_{0}$ is ascribed to $\Sigma$,
and not to $\widetilde{\Sigma}$.\ In a reversible change, $d_{\text{i}%
}q(t)\equiv0$. For example, the entropy change
\[
dS\equiv d_{\text{e}}S+d_{\text{i}}S
\]
for $\Sigma$; here,
\[
d_{\text{e}}S=-d_{\text{e}}\widetilde{S}%
\]
is the entropy exchange with the medium and $d_{\text{i}}S$ is
\emph{irreversible entropy generation} due to internal processes within
$\Sigma$; the latter is also the entropy change $dS_{0}$\ of $\Sigma_{0}$; see
Eq. (\ref{Isolated-system-irreversibility}). Similarly, if $dW$ and $dQ$
represent the work done by and the heat change of the system, then
\begin{equation}
dW\equiv d_{\text{e}}W+d_{\text{i}}W,dQ\equiv d_{\text{e}}Q+d_{\text{i}}Q.
\label{Work-Heat-Partition}%
\end{equation}
Here, $d_{\text{e}}W$ and $d_{\text{e}}Q$ are the \emph{work exchange} and
\emph{heat exchange }with the medium, respectively, and $d_{\text{i}}W\ $and
$d_{\text{i}}Q$ are \emph{irreversible} work done and heat generation due to
internal processes in $\Sigma$. For an isolated system such as $\Sigma_{0}$,
the exchange quantity vanishes so that
\begin{equation}
dW_{0}(t)=d_{\text{i}}W_{0}(t);dQ_{0}(t)=d_{\text{i}}Q_{0}(t).
\label{Isolated-system-irreversibility-W-Q}%
\end{equation}

We have introduced the pressure-volume work. We identify $d_{\text{e}}%
W=P_{0}dV=-d_{\text{e}}\widetilde{W},dW=PdV$ and $d_{\text{i}}W=(P-P_{0})dV$.
In the absence of any chemical reaction, $dN_{k}=d_{\text{e}}N_{k}%
,d_{\text{i}}N_{k}=0$ for the $k$th species of the particles; otherwise,
$d_{\text{i}}N_{k}$ is its change due to chemical reaction within $\Sigma$. As
the energy of $\Sigma$ can only change due to exchange with $\widetilde
{\Sigma}$,
\begin{equation}
dE=d_{\text{e}}E,d_{\text{i}}E=0. \label{EnergyExchange}%
\end{equation}

We now explain the concept of the relaxation time used in this work, which is
a simple generalization of its common usage but which proves useful here.
Consider some dynamical variable $\Phi(t)$ as a function of time. Its
dependence on $\mathbf{Z}(t)$ is suppressed. Let $\Phi(\infty)$\ denote its
limiting value as $t\rightarrow\infty$; thus it also represents its EQ value.
In reality, we do not have to wait infinite amount of time as we cannot
distinguish between a nonzero difference $\left\vert \Phi(t)-\Phi
(\infty)\right\vert $, which is smaller than some small cutoff value so that
for all purposes it is no different than zero, or a zero difference. Let us
introduce a normalized ratio%
\[
\varphi(t)=\left\vert [\Phi(t)-\Phi(\infty)]/[\Phi(0)-\Phi(\infty
)]\right\vert
\]
to account for this cutoff value, which we denote by $e^{-\lambda}>0$; the
cutoff is primarily determined by the experimental setup. We say that the
dynamical variable $\Phi(t)$ has \emph{equilibrated }when $\varphi(t)$ equals
the cutoff $e^{-\lambda}$. The relaxation time $\tau_{\text{rel}}$ is defined
by%
\begin{equation}
\varphi(\tau_{\text{rel}})=e^{-\lambda}. \label{Relaxation-Def}%
\end{equation}
It is clear that for a given choice $\lambda$, the relaxation time
$\tau_{\text{rel}}$\ can be used to describe how rapidly a quantity
effectively reaches its equilibrium value. Usually, one assumes for
$\varphi(t)$ an exponential form
\[
\varphi(t)=\exp(-t/\tau)
\]
or a stretched exponential form
\[
\varphi(t)=\exp(-\left(  t/\tau\right)  ^{\beta}),0<\beta\leq1,
\]
also known as the Kohlrausch-Williams-Watts form, which reduces to the simple
exponential for $\beta=1$. The relaxation time is
\begin{equation}
\tau_{\text{rel}}=\lambda^{1/\beta}\tau, \label{relaxation time}%
\end{equation}
and reduces to $\tau_{\text{rel}}=\lambda\tau$ for $\beta=1$, the exponential
form. In this work, we do not make any particular choice for the decay
behavior of $\varphi(t)$; thus, we do not make any distinction between the two
forms of relaxation given above or any other form. We use a similar cutoff to
identify the equilibration time $\tau_{\text{eq}}$. In reality, the stretched
exponential is very common in glassy dynamics, but its origin is far from
clear at present, even though attempts have been made to express it as a
superposition of simple exponentials with different $\tau$'s
\cite{Montroll,Volchek}. It is, therefore, treated as empirical in nature. The
origin for the exponential relaxation, on the other hand, is well known as the
Debye dynamics. For us, what is important is the existence of $\tau
_{\text{rel}}$ through Eq. (\ref{Relaxation-Def}) and not the actual form of
$\varphi(t)$.

We find it very useful in this work to divide all internal variables in
$\boldsymbol{\xi}$ into \emph{nonoverlapping} groups $\boldsymbol{\xi}_{n}$
indexed by $n=1,2,\cdots$. All internal variables in $\boldsymbol{\xi}_{n}%
$\ are chosen to have the same relaxation time $\tau_{n}$ so that they
equilibrate and are no longer independent of $\mathbf{X}$ over time interval
$\Delta t\gtrsim\tau_{n}$, and that all groups have distinct relaxation times
($\tau_{i}\neq\tau_{j}$ for $i\neq j$). We supplement $\boldsymbol{\xi}$ by
introducing a new group $\boldsymbol{\xi}_{0}=\mathbf{X}$ with relaxation time
$\tau_{0}=\tau_{\text{eq}}$\ in order to compactify our notation so that
$\mathbf{Z}=\left\{  \boldsymbol{\xi}_{k}\right\}  _{k\geq0}$. We also
introduce the concept of hierarchy of relaxation times $\tau_{0}>\tau_{1}%
>\tau_{2}>\cdots$ associated with $\boldsymbol{\xi}_{0},\boldsymbol{\xi}%
_{1},\boldsymbol{\xi}_{2},\cdots$, and state spaces $\mathfrak{S}_{0}%
\subset\mathfrak{S}_{1}\subset\mathfrak{S}_{2}\subset\cdots$, where
$\mathfrak{S}_{n},n=0,1,2,\cdots$, is spanned by all $\boldsymbol{\xi}%
_{k},k\leq n$, with relaxation times $\tau_{k}>\tau_{n+1}$. Physically, the
hierarchy of relaxation times means that the longest relaxation time in
$\mathfrak{S}_{n}$ is $\tau_{0}$ corresponding to $\boldsymbol{\xi}%
_{0}=\mathbf{X}$\ and the shortest relaxation time is $\tau_{n}$ corresponding
to $\boldsymbol{\xi}_{n}$.\ Thus, if $\tau_{n}>\tau_{\text{obs}}$, any
$\boldsymbol{\xi}_{k},k>n$, with relaxation time shorter than $\tau_{n}$ has
already equilibrated (\textit{i.e.}, is no longer independent of $\mathbf{X}$)
and does not have to be used to specify the NEQ state. Thus, $\mathfrak{S}%
_{n}$ is the state space needed to specify the NEQ state for $\tau_{n}%
>\tau_{\text{obs}}$. However, as $T_{0}$ is changed, both $\left\{  \tau
_{n}\right\}  $ and $\tau_{\text{obs}}$ can change as shown in Fig.
\ref{Fig-Relaxation}. This then affects the choice of the required state space
$\mathfrak{S}_{n}$. Thus, the hierarchy becomes a central concept in our analysis.

One of the most important set of internal variables is that associated with
the vibrational modes in the system. We denote it by $\boldsymbol{\xi
}_{\text{v}}$ and seems to have the property that it is always inactive. This
is shown by the lowest lying relaxation time curve corresponding to
$\tau_{\text{v}}$ in Fig. \ref{Fig-Relaxation}. This is because, we expect
these modes to always come to equilibrium with the medium for any reasonable
$\tau_{\text{obs}}$.

\section{Generalized Nonequilibrium Thermodynamics in the Extended Space}

We are mostly interested in disordered states of a system in this work. Any
ordered state, if it exists, is taken out of the consideration from start.
Thus, the state space $\mathfrak{S}_{\mathbf{X}}$ only contains disordered
states. For vitrification, states in $\mathfrak{S}_{\mathbf{X}}$ refer to the
(physical or hypothetical) EQ states of the supercooled liquid. Defining such
as restricted form of the equilibrium state space is very common in
theoretical physics. For example, when we talk about an equilibrium crystal of
a material, it is also defined in a restricted sense in which its molecules
are not supposed to dissociate into constituent atoms. From now on, we will
denote EQ quantities either by a subscript "eq" or "SCL" and NEQ quantities
without any subscript. If we are interested in a ordered state, we will use a
subscript "CR" to denote its quantity.

\subsection{Equilibrium State}

In EQ thermodynamics, a body is specified by a set $\mathbf{X}$\ formed by its
independent extensive observables ($E,V,N$, etc.); the set also serves the
purpose of specifying the thermodynamic \emph{state} (also known as the
\emph{macrostate}) $\mathsf{M}$\ of the system. All EQ states belong to the
state space $\mathfrak{S}_{\mathbf{X}}$ as said above. The thermodynamic
entropy of the body in equilibrium is a \emph{state function} of $\mathbf{X}$
and is written as $S_{\text{eq}}(\mathbf{X})$. It is one of the state
functions of the system and is supposed to be differentiable except possibly
at phase transitions, which we will not consider in this review. It satisfies
the Gibbs fundamental relation%
\begin{equation}
dS_{\text{eq}}(\mathbf{X})=(dE+P_{0}dV-\mu_{0}dN+...)/T_{0}%
,\label{Gibbs-Equilibrium}%
\end{equation}
where we have shown only the terms related to $E,V$ and $N$. The missing terms
refer to the remaining variables in $\mathbf{X\equiv}\left\{  X_{p}\right\}
$, and $T_{0},P_{0},\mu_{0},$ etc. have their standard meaning in equilibrium%
\begin{equation}
\frac{\partial S_{\text{eq}}}{\partial E}\doteq\frac{1}{T_{0}},\frac{\partial
S_{\text{eq}}}{\partial V}\doteq\frac{P_{0}}{T_{0}},\frac{\partial
S_{\text{eq}}}{\partial N}\doteq-\frac{\mu_{0}}{T_{0}},\cdots
.\label{Fields-Eq}%
\end{equation}
We have used a subscript $0$ since in equilibrium, the fields of $\Sigma$ and
$\widetilde{\Sigma}$ are the same.

\subsection{Nonequilibrium States and Internal Equilibrium
States\label{Sec-InternalEquilibrium}}

The above conclusion is most certainly not valid for a body out of
equilibrium. If the body is not in equilibrium with its medium, its
(macro)state $\mathsf{M}(t)$ will continuously change (relax), which is
reflected in the changes in all of its physical quantities $q(t)$ with time.
Such variations mean that the states no longer belong to $\mathfrak{S}%
_{\mathbf{X}}$. These states belong to the \emph{enlarged} state space
$\mathfrak{S}_{\mathbf{Z}}$ spanned by $\mathbf{Z=(X,}\boldsymbol{\xi
}\mathbf{)}$. The set $\boldsymbol{\xi}$ of internal variables
\cite{deGroot,Prigogine,Maugin,Beris,Kestin,Woods,Coleman,Jou0,Meixner} cannot
be controlled from the outside \cite{Note1}; a readable history of internal
variables is available in a recent paper by Maugin \cite{Maugin-2}. They are
used to characterize internal structures or inhomogeneity
\cite{deGroot,Prigogine,Maugin,Coleman,Jou0,Gujrati-I,Gujrati-II,Gujrati-III,Langer,Langer1,Langer2}
in the system, and are independent of the observables in $\mathbf{X}$ away
from equilibrium but become dependent on $\mathbf{X}$ in equilibrium. From
Theorem 4 in \cite{Gujrati-II}, it follows that with a proper choice of the
number of internal variables, the entropy can be written as $S(\mathbf{Z}(t))$
with no explicit $t$-dependence. The situation is now almost identical to that
of a body in equilibrium:\ The entropy is a function of $\mathbf{Z}(t)$ with
no explicit time-dependence. This allows us to identify $\mathbf{Z}(t)$\ as
the set of NEQ \emph{state variables}. States for which the entropy $S$
becomes a \emph{state function} of the \emph{state variable} $\mathbf{Z}$ are
called \emph{internal equilibrium }(IEQ) states
\cite{Gujrati-I,Gujrati-II,Gujrati-III,Langer,Langer1,Langer2,Woods,Maugin}
and we write%
\[
S_{\text{ieq}}(t)=S(\mathbf{Z}(t))
\]
for their entropy. This allows us to extend Eq. (\ref{Gibbs-Equilibrium}) to%
\begin{equation}
dS_{\text{ieq}}(t)=%
{\textstyle\sum\nolimits_{p}}
\left(  \partial S_{\text{ieq}}(t)/\partial Z_{p}(t)\right)  dZ_{p}%
(t)\label{Gibbs_Fundamental_Extended}%
\end{equation}
in which the partial derivatives are related to the fields of the system:%
\begin{align}
\frac{\partial S_{\text{ieq}}(t)}{\partial E(t)}  & \doteq\frac{1}{T(t)}%
,\frac{\partial S_{\text{ieq}}(t)}{\partial V(t)}\doteq\frac{P(t)}%
{T(t)},\nonumber\\
\frac{\partial S_{\text{ieq}}(t)}{\partial N(t)}  & \doteq-\frac{\mu(t)}%
{T(t)},\cdots,\frac{\partial S_{\text{ieq}}(t)}{\partial\boldsymbol{\xi}%
(t)}\doteq\frac{\mathbf{A}(t)}{T(t)};\label{Fields_System}%
\end{align}
these fields will change in time unless the system has reached equilibrium. It
is customary to call $\mathbf{A}$ the \emph{affinity} \cite{Donder}. For a
fixed $\mathbf{Z}$, $S_{\text{ieq}}$\ does not change in time. Hence, it must
have the maximum possible value for fixed $\mathbf{Z}$
\cite{Gujrati-Entropy1,Gujrati-Entropy2}. The EQ value of $\mathbf{A}$
vanishes \cite{deGroot,Prigogine}:
\begin{equation}
\mathbf{A}_{\text{eq}}=0.\label{EQ-Affinity}%
\end{equation}
In this case, $S_{\text{ieq}}$\ is no longer a function of $\boldsymbol{\xi}$,
which means that $\boldsymbol{\xi}$ \ is no longer independent of $\mathbf{X}$.

We consider the extension of the derivation given earlier \cite{Gujrati-I} for
the entropy of $\Sigma_{0}$\ by including the internal variable contribution
to obtain as the statement of the second law:%
\begin{align}
\frac{dS_{0}(t)}{dt}  & =\left(  \frac{1}{T(t)}-\frac{1}{T_{0}}\right)
\frac{dE(t)}{dt}+\nonumber\\
& \left(  \frac{P(t)}{T(t)}-\frac{P_{0}}{T_{0}}\right)  \frac{dV(t)}{dt}%
+\frac{\mathbf{A}(t)}{T(t)}\cdot\frac{d\boldsymbol{\xi}(t)}{dt}%
\label{Total_Entropy_Rate}\\
& >0;\nonumber
\end{align}
for a NEQ state. As the entropy of an isolated system $\Sigma_{0}$ can only
increase, $dS_{0}(t)/dt$ cannot be negative, which explains the last
inequality above for a NEQ process. The strict inequality will be replaced by
an equality for $\Sigma_{0}$\ in equilibrium. Each term in the first equation
must be positive in accordance with the second law for a NEQ state.

It follows from Eq. (\ref{Second_Law}) that the above discussion also applies
to an interacting system in a medium for which $d_{\text{i}}S/dt$ is
nonnegative. Thus, we can apply it to a vitrification process in which the
energy decreases with time during \emph{isothermal} (fixed $T_{0}$ of
$\Sigma_{0}$) relaxation. We must, therefore, have%
\begin{equation}
T(t)>T_{0} \label{Temperature_Behavior}%
\end{equation}
during any relaxation (at a fixed temperature and pressure of the medium) so
that $T(t)$ approaches $T_{0}$ from above [$T(t)$ $\rightarrow$ $T_{0}^{+}$]
and becomes equal to $T_{0}$ as the relaxation ceases and the equilibrium is
achieved; the plus symbol is to indicate that the $T(t)$ reaches $T_{0}$ from above.

The relaxation times for different internal variables in $\boldsymbol{\xi}$
depend on their nature and do not have to be the same. Indeed, the spectrum of
relaxation times in various contexts such as in crystalline solids
\cite{Nowik} and glasses \cite{Goldstein} is intimately related to the
existence of internal variables. Therefore, the spectrum of relaxation times
will be pivotal in our discussion and will be picked up again in Sec.
\ref{Sec-Hierarchy}.

By attempting to describe NEQ properties of a system by invoking internal
variables, one is able to explain a broad spectrum of NEQ phenomena, but it
should be stated here that the choice and the number of state variables
included in $\mathbf{X}$ or $\mathbf{Z}$\ is not so trivial and must be
determined by the nature of the experiments \cite{Maugin}. As we will see in
Sec. \ref{Sec-Hierarchy}, the observation time $\tau_{\text{obs}}$ plays a
central role in determining the relevant state variables during an experiment:

\begin{proposition}
\label{Prop-RelevantStateVariables} The state variables\ that determine the
generalized NEQ\ thermodynamics are those whose relaxation times are longer
than $\tau_{\text{obs}}$.
\end{proposition}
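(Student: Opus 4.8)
The plan is to assign each internal-variable group $\boldsymbol{\xi}_k$ to the active or inactive class by tracking two linked quantities over the window $\tau_{\text{obs}}$: its affinity $\mathbf{A}_k$ and its independence from $\mathbf{X}$. The organizing relations are the affinity identity $\partial S_{\text{ieq}}/\partial\boldsymbol{\xi}=\mathbf{A}/T$ in Eq.~(\ref{Fields_System}), the vanishing-affinity condition Eq.~(\ref{EQ-Affinity}), and the entropy-production term $(\mathbf{A}(t)/T(t))\cdot(d\boldsymbol{\xi}(t)/dt)$ in Eq.~(\ref{Total_Entropy_Rate}). I would split the groups using the three cases Relax1--Relax3 and treat the equilibrated groups and the surviving groups separately.

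First I would dispose of the groups with $\tau_k<\tau_{\text{obs}}$ (case Relax1). By the relaxation-time definition Eq.~(\ref{Relaxation-Def}), such a group equilibrates within $\tau_{\text{obs}}$, so by Eq.~(\ref{EQ-Affinity}) its affinity relaxes to $\mathbf{A}_k=0$; equivalently $S_{\text{ieq}}$ ceases to depend on $\boldsymbol{\xi}_k$, which is then no longer independent of $\mathbf{X}$. Such a group therefore drops out of the sum in Eq.~(\ref{Gibbs_Fundamental_Extended}) and its contribution $(\mathbf{A}_k/T)\cdot(d\boldsymbol{\xi}_k/dt)$ vanishes in Eq.~(\ref{Total_Entropy_Rate}). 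Having collapsed onto the equilibrium manifold, it cannot be used to specify the NEQ state and plays no thermodynamic role over $\tau_{\text{obs}}$.

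Conversely, for the groups with $\tau_k>\tau_{\text{obs}}$ (cases Relax2 and Relax3) Eq.~(\ref{Relaxation-Def}) shows the group has not equilibrated during $\tau_{\text{obs}}$, so $\mathbf{A}_k\neq0$ and $\boldsymbol{\xi}_k$ remains independent of $\mathbf{X}$; it must be retained as an independent coordinate. Using the hierarchy $\tau_0>\tau_1>\tau_2>\cdots$ and the nested spaces $\mathfrak{S}_0\subset\mathfrak{S}_1\subset\cdots$, the surviving groups are exactly $\boldsymbol{\xi}_0,\ldots,\boldsymbol{\xi}_n$ for the largest $n$ with $\tau_n>\tau_{\text{obs}}>\tau_{n+1}$, so the operative state space is $\mathfrak{S}_n$. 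The variables that determine the generalized NEQ thermodynamics are then precisely those with relaxation times longer than $\tau_{\text{obs}}$, which is the assertion.

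The step I expect to be the main obstacle is the frozen-in case Relax3, $\tau_k\gg\tau_{\text{obs}}$: because such a variable hardly changes, $d\boldsymbol{\xi}_k/dt\approx0$ and its direct contribution to the instantaneous entropy production in Eq.~(\ref{Total_Entropy_Rate}) is negligible, which could be misread as thermodynamic irrelevance. The resolution is to separate \emph{contributing to entropy generation} from \emph{being needed to specify the state}: even with $d\boldsymbol{\xi}_k/dt\approx0$, a frozen group carries $\mathbf{A}_k\neq0$ and fixes $T(t)$, $P(t)$ and the remaining system fields through Eq.~(\ref{Fields_System}), so it stays an indispensable independent coordinate. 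Making this distinction precise, and assigning the borderline group $\tau_k\simeq\tau_{\text{obs}}$ to one side of the window through the cutoff $e^{-\lambda}$ of Eq.~(\ref{Relaxation-Def}), is where the care emphasized in the text is genuinely required.
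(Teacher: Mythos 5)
Your argument is correct and follows essentially the same route as the paper's own justification in Sec.~\ref{Sec-Hierarchy}: groups with $\tau_{k}<\tau_{\text{obs}}$ equilibrate, their affinities vanish so the entropy ceases to depend on them, while groups with $\tau_{k}>\tau_{\text{obs}}$ retain nonzero affinities and must be kept, with the window $\tau_{n+1}<\tau_{\text{obs}}<\tau_{n}$ of Eq.~(\ref{ObservationTime-Hierarchy}) selecting $\mathfrak{S}_{n}$ exactly as in the text. Your closing remark distinguishing ``contributes to entropy production'' from ``needed to specify the state'' for frozen-in variables is a faithful elaboration of the paper's Relax3 case and of the indirect role of inactive variables discussed in Sec.~\ref{Sec-Fictive}, not a departure from it.
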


\begin{proof}
The proposition will be justified in Sec. \ref{Sec-Hierarchy}; see the
paragraph containing Eq. (\ref{ObservationTime-Hierarchy}).
\end{proof}

We will assume here that $\mathbf{Z}$ has been specified.\emph{ }For any IEQ
states \textsf{M}$_{\text{ieq}}(\mathbf{Z})$, we have $\tau_{\text{ieq}%
}\lesssim\tau_{\text{obs}}<\tau_{\text{eq}}$, where we have introduced the
\emph{internal equilibration time }$\tau_{\text{ieq}}$ required for the system
to come to an IEQ state in $\mathfrak{S}_{\mathbf{Z}}$. As expected,
$\tau_{\text{ieq}}=\tau_{\text{ieq}}(\mathbf{Z})$ depends on $\mathbf{Z}$ but
we will not explicitly exhibit its state dependence unless clarity is needed.
These states appear for $\tau_{\text{obs}}\geq\tau_{\text{ieq}}$. There are
many other states in $\mathfrak{S}_{\mathbf{Z}}$ having nonstate entropies
that\ appear for $\tau_{\text{obs}}<\tau_{\text{ieq}}$. As $\tau_{\text{obs}%
}\rightarrow\tau_{\text{ieq}}$, we obtain an IEQ state \textsf{M}%
$_{\text{ieq}}(\mathbf{Z})$. Therefore, there appears a delicate balance
between $\tau_{\text{obs}}$ and what internal variables we can describe by our
thermodynamic approach using the concept of IEQ states. This leads us to
consider the hierarchy of relaxation times, which is taken in Sec.
\ref{Sec-Hierarchy}.

It may appear to a reader that the concept of entropy being a state function
is very restrictive. This is not the case as this concept, although not
recognized by several workers, is implicit in the literature where the
relationship of the thermodynamic entropy with state variables is
investigated. To appreciate this, we observe that the entropy of a body in
internal equilibrium \cite{Gujrati-I,Gujrati-II} is given by the Boltzmann
formula%
\begin{equation}
S(\mathbf{Z}(t))=\ln W(\mathbf{Z}(t)), \label{Boltzmann_S_Extended}%
\end{equation}
in terms of the number of microstates $W(\mathbf{Z}(t))$ corresponding to
$\mathbf{Z}(t)$. In classical NEQ thermodynamics \cite{deGroot}, the entropy
is always taken to be a state function. In the Edwards approach \cite{Edwards}
for granular materials, all microstates are equally probable as is required
for the above Boltzmann formula. Bouchbinder and Langer \cite{Langer} assume
that the NEQ entropy is given by Eq. (\ref{Boltzmann_S_Extended}). Lebowitz
\cite{Lebowitz} also takes the above formulation for his definition of the NEQ
entropy. As a matter of fact, we are not aware of any work dealing with
entropy computation that does not assume the NEQ entropy to be \ a state
function. This does not, of course, mean that all states of a system are IEQ
states. For states that are not in internal equilibrium, the entropy is not a
state function so that it will have an explicit time dependence. But, as shown
elsewhere \cite{Gujrati-II}, this can be avoided by enlarging the space of
internal variables. The choice of how many internal variables are needed will
depend on experimental time scales.

\section{Hierarchy among Relaxation Times and Enlarged State
Spaces\label{Sec-Hierarchy}}

We now classify state variables in a hierarchical manner as below. In IEQ
states, $\boldsymbol{\xi}$ has had enough time $\tau_{\text{obs}}%
=\tau_{\text{ieq}}<\tau_{\text{eq}}$ for \textsf{M}$_{\text{ieq}}$ to emerge.
But for $\tau_{\text{obs}}<\tau_{\text{ieq}}$, the states in $\mathfrak{S}%
_{\mathbf{Z}}$ have not had enough time for \textsf{M}$_{\text{ieq}}$ to
emerge so that their entropy is a nonstate function, which will continue to
increase if the system is left isolated until it reaches $S_{\text{ieq}%
}(\mathbf{X},\boldsymbol{\xi})$ and becomes a state function. The affinity
$\mathbf{A}$ corresponding to $\boldsymbol{\xi}$ is nonzero in \textsf{M}%
$_{\text{ieq}}$. If there were other internal variables $\boldsymbol{\xi
}^{\prime},\boldsymbol{\xi}^{\prime\prime},\boldsymbol{\xi}^{\prime
\prime\prime},\cdots$ in the system, with relaxation times $\tau^{\prime}%
,\tau^{\prime\prime},\tau^{\prime\prime\prime},\cdots$, respectively, that are
distinct from $\boldsymbol{\xi}$, then these must have equilibrated during
$\tau_{\text{ieq}}$ so that their affinities $\mathbf{A}^{\prime}%
,\mathbf{A}^{\prime\prime},\mathbf{A}^{\prime\prime\prime},\cdots$ have
vanished, implying that they are no longer independent of $\mathbf{X}$
($\mathbf{A}^{\prime}=\partial S/\partial\boldsymbol{\xi}^{\prime}=0$). This
means that the entropy does not depend on them. It is clear that
$\tau_{\text{ieq}}$ forms an upper bound for the relaxation times
$\tau^{\prime},\tau^{\prime\prime},\tau^{\prime\prime\prime},\cdots$. Thus,
they play no role in $\mathfrak{S}_{\mathbf{Z}}$. When the process is carried
out somewhat faster ($\tau_{\text{obs}}<\tau_{\text{eq}}$) than that required
for obtaining \textsf{M}$_{\text{eq}}(\mathbf{X})$, then $\boldsymbol{\xi}$
has not had enough time to "equilibrate" as we have discussed earlier
\cite{Gujrati-Entropy1,Gujrati-Entropy2} and $\mathbf{A}\neq0$.

Even if $S$ does not depend on $\boldsymbol{\xi}^{\prime},\boldsymbol{\xi
}^{\prime\prime},\boldsymbol{\xi}^{\prime\prime\prime},\cdots$, we will see in
Sec. \ref{Sec-Fictive} that they affect the thermodynamics of the system
indirectly, a fact that does not seem to have been appreciated. For the
moment, we will not consider the internal variables $\boldsymbol{\xi}^{\prime
},\boldsymbol{\xi}^{\prime\prime},\boldsymbol{\xi}^{\prime\prime\prime}%
,\cdots$. We will consider them later and will denote them collectively by
$\boldsymbol{\xi}_{\text{E}}=(\boldsymbol{\xi}^{\prime},\boldsymbol{\xi
}^{\prime\prime},\boldsymbol{\xi}^{\prime\prime\prime},\cdots)$.

The discussion below is somewhat abstract and intricate, and requires patience
on the part of the reader. The set-theoretic notation is perfectly suited for
the abstract nature of the discussion. Some readers may find the set-theoretic
notation cumbersome, but this is the price we must pay to make the discussion
comprehensive but compact.

To simplify our discussion, we \emph{assume} that all internal variables in
$\boldsymbol{\xi}$ are divided into \emph{nonoverlapping} groups
$\boldsymbol{\xi}_{n}$ indexed by $n=1,2,\cdots$. We further assume that all
internal variables in $\boldsymbol{\xi}_{n}$ have the same relaxation time
$\tau_{n}$ so that they equilibrate and are no longer independent of
$\mathbf{X}$ for $\Delta t\gtrsim\tau_{n}$. The relaxation times depend
strongly on $\mathbf{X}$. Let us also define $\boldsymbol{\xi}_{0}=\mathbf{X}$
in order to compactify our notation below. Because of this, we can include
$\boldsymbol{\xi}_{0}=\mathbf{X}$ whenever we speak of internal variables from
now on, unless clarity is needed. The groups $\boldsymbol{\xi}_{n}%
,n=0,1,2,\cdots$ are indexed by $n$ so that $\tau_{n}$'s appear in a
\emph{decreasing }order (with $\tau_{0}=\tau_{\text{eq}}$):
\begin{equation}
\tau_{0}>\tau_{1}>\tau_{2}>\cdots. \label{RelaxationTime-Hierarchy}%
\end{equation}
The relaxation times form a discrete set and not a continuum for simplicity.
It is important that the set $\left\{  \boldsymbol{\xi}_{k}\right\}  $ has a
finite though large number of elements for a physically sensible thermodynamic
description of the system; having an enormous number of elements will make the
description unnecessarily too complex and completely useless for thermodynamics.

We now introduce the sequence of state spaces $\left\{  \mathfrak{S}%
_{n}\right\}  $, where $\mathfrak{S}_{n},n=0,1,2,\cdots$ is spanned by the
\emph{union}%
\[
\boldsymbol{\xi}^{(n)}\doteq\cup_{k=1}^{n}\boldsymbol{\xi}_{k},n\geq1,
\]
of all $\boldsymbol{\xi}_{k},k\leq n$, with relaxation times $\tau_{k}%
>\tau_{n+1}$, with $\boldsymbol{\xi}^{(0)}$ (not to be confused with
$\boldsymbol{\xi}_{0}=\mathbf{X}$) denoting an \emph{empty} set, so that
\[
\mathbf{Z}_{n}\doteq(\mathbf{X},\boldsymbol{\xi}_{1},\boldsymbol{\xi}%
_{2},\cdots,\boldsymbol{\xi}_{n})\equiv(\boldsymbol{\xi}_{0},\boldsymbol{\xi
}^{(n)}),n\geq0.
\]
Thus, $\mathfrak{S}_{0}=\mathfrak{S}_{\mathbf{X}}$, formed by $\mathbf{Z}%
_{0}=\boldsymbol{\xi}_{0}=\mathbf{X}$, is relevant when $\tau_{0}%
>\tau_{\text{obs}}>\tau_{1}$. Similarly, $\mathfrak{S}_{1}$, formed by
$\mathbf{Z}_{1}=(\boldsymbol{\xi}_{0},\boldsymbol{\xi}^{(1)})=(\mathbf{X}%
,\boldsymbol{\xi}_{1}),$ is relevant when $\tau_{1}>\tau_{\text{obs}}>\tau
_{2}$, and so on.

It is clear from the construction that the state spaces $\mathfrak{S}%
_{n},n=0,1,2,\cdots$ are ordered with \emph{increasing} dimensions:
\begin{equation}
\mathfrak{S}_{0}\subset\mathfrak{S}_{1}\subset\mathfrak{S}_{2}\subset\cdots.
\label{StateSpace-Hierarchy}%
\end{equation}
The longest relaxation time in $\mathfrak{S}_{n}$ is $\tau_{0}$ corresponding
to $\boldsymbol{\xi}_{0}=\mathbf{X}$\ and the shortest relaxation time is
$\tau_{n}$ corresponding to $\boldsymbol{\xi}_{n}$.\ Any $\boldsymbol{\xi}%
_{k},k>n$ with relaxation time shorter than $\tau_{n}$ need not be considered
as it has already equilibrated and does not affect any state in $\mathfrak{S}%
_{n}$. We can summarize this conclusion as the following

\begin{proposition}
\label{Prop-AdditionalInternalVariables}The additional internal variable
$\boldsymbol{\xi}_{k}$ in $\mathfrak{S}_{k}$ relative to $\mathfrak{S}_{k-1}$
equilibrates and plays no role (i.e., is absent) in all smaller state spaces
$\mathfrak{S}_{l},l\leq k-1$ but participate in all state spaces
$\mathfrak{S}_{l}$ larger than $\mathfrak{S}_{k-1}$, i.e., $l\geq k$.
\end{proposition}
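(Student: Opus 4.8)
The plan is to derive the statement directly from the construction of the nested spaces $\{\mathfrak{S}_n\}$ together with the decreasing hierarchy of relaxation times in Eq.~(\ref{RelaxationTime-Hierarchy}); no new machinery is needed beyond matching set membership to the equilibration criterion encoded in the Relax1--Relax3 cases. First I would record the purely set-theoretic fact built into the definition of $\mathfrak{S}_l$: since $\mathfrak{S}_l$ is spanned by $\boldsymbol{\xi}^{(l)}=\cup_{j=1}^{l}\boldsymbol{\xi}_j$ (adjoined to $\boldsymbol{\xi}_0=\mathbf{X}$), the group $\boldsymbol{\xi}_k$ is one of the spanning variables of $\mathfrak{S}_l$ if and only if $k\le l$. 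This immediately yields the ``absence'' half of each claim as a matter of construction.

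For the participation half ($l\ge k$), I would invoke the relevance window of $\mathfrak{S}_l$, namely $\tau_l>\tau_{\text{obs}}>\tau_{l+1}$. Because the $\tau_n$ decrease, $k\le l$ forces $\tau_k\ge\tau_l>\tau_{\text{obs}}$, so $\boldsymbol{\xi}_k$ falls under Relax2/Relax3: it has not equilibrated during $\tau_{\text{obs}}$, remains independent of $\mathbf{X}$, and carries a nonvanishing affinity $\mathbf{A}_k=\partial S/\partial\boldsymbol{\xi}_k\ne0$ through Eq.~(\ref{Fields_System}). Hence it is an active state variable of $\mathfrak{S}_l$ and genuinely participates in the thermodynamics for every $l\ge k$, consistent with the nesting (\ref{StateSpace-Hierarchy}).

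For the ``plays no role'' half ($l\le k-1$), I would again use the relevance window $\tau_l>\tau_{\text{obs}}>\tau_{l+1}$ of $\mathfrak{S}_l$. Now $l\le k-1$ means $k\ge l+1$, so the decreasing ordering gives $\tau_k\le\tau_{l+1}<\tau_{\text{obs}}$, placing $\boldsymbol{\xi}_k$ squarely in Relax1: it has equilibrated, its affinity has vanished ($\mathbf{A}_k=0$ as in Eq.~(\ref{EQ-Affinity})), and it is no longer independent of $\mathbf{X}$, so the entropy does not depend on it. This matches the set-theoretic absence already noted and completes both halves.

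The only genuinely delicate points are bookkeeping: verifying the boundary indices ($k=l$ in the participation case and $k=l+1$ in the absence case, where the inequalities are tightest but still strict because $\tau_{\text{obs}}$ lies strictly between consecutive $\tau$'s), and making explicit that ``absent from the spanning set'' and ``equilibrated with vanishing affinity'' are the two faces of the same statement. I expect the main conceptual obstacle, such as it is, to be cleanly identifying the relevance window $\tau_l>\tau_{\text{obs}}>\tau_{l+1}$ as the correct criterion---everything else is unpacking definitions, so the result is essentially a restatement of the hierarchy construction rather than a substantive deduction.
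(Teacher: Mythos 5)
Your proposal is correct and follows essentially the same route as the paper: the paper's proof is literally a pointer to the preceding construction of the nested spaces and the relevance window $\tau_{l+1}<\tau_{\text{obs}}<\tau_{l}$, which is exactly what you unpack, with the index bookkeeping ($\tau_{k}\leq\tau_{l+1}<\tau_{\text{obs}}$ for $l\leq k-1$, $\tau_{k}\geq\tau_{l}>\tau_{\text{obs}}$ for $l\geq k$) done correctly. Your version is simply a more explicit rendering of the argument the paper leaves implicit.
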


\begin{proof}
See the discussion above.
\end{proof}

%

\begin{figure}
[ptb]
\begin{center}
\includegraphics[
height=2.3315in,
width=3.5622in
]%
{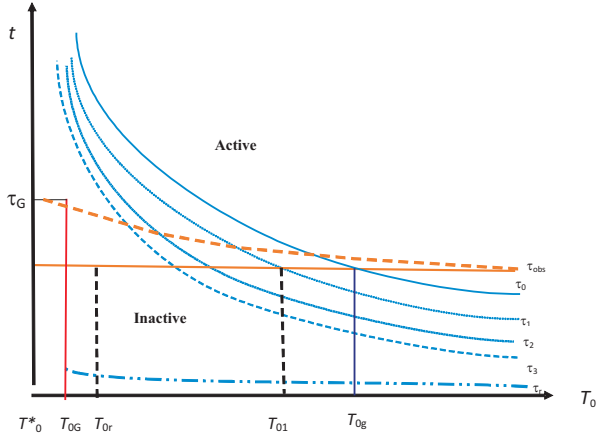}%
\caption{Schematic form of relaxation times $\left\{  \tau_{n}\right\}  $ as a
function of the temperature $T_{0}$ for a fixed pressure $P_{0}$ of the
medium. This figure will play an important role in the discussion of
vitrification later. At low enough temperatures near $T_{0}^{\ast
}<T_{0\text{G}}$, relaxation times become extremely large so that there is
practically no relaxation over a long period of time. However at
$T_{0}>T_{0\text{g}}$, all internal variables have equilibrated over
$\tau_{\text{obs}}$ in the figure. We have drawn $\tau_{\text{obs}}$ as a red
solid horizontal line when it does not change, and as a red broken line when
it increases, as $T_{0}$ is reduced.}%
\label{Fig-Relaxation}%
\end{center}
\end{figure}

Let us consider some observation time $\tau_{\text{obs}}$ used to observe a
state \textsf{M} of an interacting system. We can always find a pair of
\emph{neighboring} state spaces $\mathfrak{S}_{n+1}\supset\mathfrak{S}%
_{n},n\geq0$ satisfying
\begin{equation}
\tau_{n+1}<\tau_{\text{obs}}<\tau_{n}; \label{ObservationTime-Hierarchy}%
\end{equation}
the two sides define a \emph{window} $\Delta t_{n}\doteq\tau_{n}-\tau_{n+1}$
in which $\tau_{\text{obs}}$\ must lie. As $\tau_{\text{obs}}>\tau_{n+1}$%
,$\ $\ all $\boldsymbol{\xi}_{k}$'s$,k>n$, do not have to be considered to
describe the state \textsf{M} as they have already equilibrated (cf. the
discussion of $\boldsymbol{\grave{\xi}}$ above); thus, $\mathfrak{S}_{k},k>n$,
play no role in describing \textsf{M}. As $\tau_{\text{obs}}<\tau_{n}$, we
need to consider all $\boldsymbol{\xi}_{k},k\leq n$ to describe \textsf{M}. We
must, therefore, use $\mathfrak{S}_{n}$ to describe \textsf{M} for a
$\tau_{\text{obs}}$ in this window; we denote \textsf{M} by \textsf{M}%
$(\mathbf{Z}_{n})$ for clarity in this section. Among all the states in
$\mathfrak{S}_{n}$, there are IEQ states \textsf{M}$_{\text{ieq}}%
(\mathbf{Z}_{n})$ for which $S=S_{\text{ieq}}(\mathbf{Z}_{n})$. This happens
when $\tau_{\text{obs}}\simeq\tau_{\text{ieq}}(\mathbf{Z}_{n})\leq\tau_{n}$,
$\tau_{\text{ieq}}(\mathbf{Z}_{n})$\ denoting the time required for
\textsf{M}$(\mathbf{Z}_{n})$ to evolve into \textsf{M}$_{\text{ieq}%
}(\mathbf{Z}_{n})$; we will also use $\tau_{\text{ieq}}^{(n)}$ or simply
$\tau_{\text{ieq}}$ to denote $\tau_{\text{ieq}}(\mathbf{Z}_{n})$ in
$\mathfrak{S}_{n}$\ if no confusion will arise. For $n=0$, $\tau_{\text{ieq}%
}(\mathbf{Z}_{0})$ simply refers to $\tau_{\text{eq}}$.

There exists IEQ states \textsf{M}$_{\text{ieq}}(\mathbf{Z}_{n})$ in
$\mathfrak{S}_{n}$ for which $\boldsymbol{\xi}_{n}$ is no longer independent
of $\mathbf{X}$; for these states, $\tau_{\text{obs}}\simeq\tau_{\text{ieq}%
}(\mathbf{Z}_{n-1},\boldsymbol{\xi}_{n}(\mathbf{X}))\equiv\tau_{\text{ieq}%
}(\mathbf{Z}_{n-1})$. However, $\boldsymbol{\xi}_{n}\rightarrow\boldsymbol{\xi
}_{n}(\mathbf{X})$ as $t\rightarrow\tau_{n}$ even if \textsf{M}$(\mathbf{Z}%
_{n-1})$ in $\mathfrak{S}_{n-1}$ has not turned into \textsf{M}$_{\text{ieq}%
}(\mathbf{Z}_{n-1})$. As achieving internal equilibrium will take some
additional time, we have $\tau_{\text{ieq}}(\mathbf{Z}_{n-1})>\tau_{n}$. We
thus conclude that (with $\tau_{\text{ieq}}^{(0)}$ representing $\tau
_{\text{eq}}$)%
\begin{equation}
\tau_{\text{ieq}}^{(n)}<\tau_{\text{ieq}}^{(n-1)},n>0,
\label{IEQ-TimeScale-Inequality}%
\end{equation}
which will be assumed in this work.

We now consider the window%
\begin{equation}
\tau_{1}<\tau_{\text{obs}}<\tau_{0}. \label{ObservationTime-Hierarchy-10}%
\end{equation}
As $\tau_{0}>\tau_{\text{obs}}>\tau_{1}$, $\boldsymbol{\xi}_{1}$ has already
equilibrated so it need not be considered, but $\boldsymbol{\xi}%
_{0}=\mathbf{X}$ has not yet equilibrated. Thus, the entropy must be a
function only of the observables $\mathbf{X}$, which we must write as
$S_{\text{ieq}}(\mathbf{X}(t))$ as it continues to vary. As $\tau_{\text{obs}%
}\rightarrow\tau_{0}$, $\mathbf{X}(t)\rightarrow\mathbf{X}_{\text{eq}}$,
$S_{\text{ieq}}$ continues to increase until it finally reaches $S_{\text{eq}%
}$; there is no explicit time dependence as \emph{all} $\boldsymbol{\xi}_{k}%
$'s$,k>0$, have equilibrated; see also Landau and Lifshitz \cite{Landau} and
Wilks \cite{Wilks}, where NEQ states with respect to the medium are treated as
IEQ states in $\mathfrak{S}_{\mathbf{X}}$. This is the most common way NEQ
states in the literature are treated when internal variables are not invoked.
This is only possible when $\tau_{\text{obs}}$ satisfies Eq.
(\ref{ObservationTime-Hierarchy-10}).

We now consider the remaining case
\begin{equation}
\tau_{\text{obs}}\geq\tau_{0}. \label{ObservationTime-Hierarchy-0}%
\end{equation}
This situation corresponds to the quasistatic case so that even
$\boldsymbol{\xi}_{0}=\mathbf{X}$ has equilibrated to $\mathbf{X}_{\text{eq}}$
and we are dealing with an EQ state
\[
S=S_{\text{eq}}=S(\mathbf{X}_{\text{eq}}).
\]

We know that $\left\{  \tau_{n}\right\}  $ depend on the state of the system.
In vitrification that is of our primary interest here, they depend on the
temperature $T_{0}$. It is commonly believed that $\tau_{n}$'s increase with
decreasing $T_{0}$ as shown in Fig. \ref{Fig-Relaxation}, where we show them
as a function of $T_{0}$. From this figure, we observe that for a given
$\tau_{\text{obs}}$, drawn as a solid or broken line in red, $\boldsymbol{\xi
}_{\text{E}}$ correspond to the internal variables that lie in the
\emph{inactive} zone lying below $\tau_{\text{obs}}$. (Recall that
$\boldsymbol{\xi}_{0}=\mathbf{X}$ is now included in internal variables.) They
have all equilibrated. The \emph{active} zone corresponds to internal
variables that lie above $\tau_{\text{obs}}$. They have not equilibrated. For
higher temperatures ($T_{0}>T_{0\text{g}}$), all internal variables are
inactive. At lower temperature, some of them become active and make the system
out of equilibrium. At very low temperatures, all internal variables become
active for their NEQ role. We will discuss this figure further in Sec.
\ref{Sec-Vitrification}.

\section{General Consideration\label{Sec_General_Consideration}}

We have in Sec. \ref{Sec-Hierarchy} that for a given $\tau_{\text{obs}}$, we
can find the window $\Delta t_{n}$ satisfying Eq.
(\ref{ObservationTime-Hierarchy-10}), which then determines the state space
$\mathfrak{S}_{n}$ to describe any state \textsf{M} for the given
$\tau_{\text{obs}}$. The internal variables $\boldsymbol{\xi}_{k}$'s$,k>n$, do
not have to be considered as their affinities $\mathbf{A}_{k}$'s have vanished
for the given $\tau_{\text{obs}}$. However, the situation is somewhat
complicated for the following reason. As $\tau_{k}$'s are determined by
time-dependent $\mathbf{Z}_{n}$, the window will continue to change with time
for a given $\tau_{\text{obs}}$ so the value of $n$ will have to adjusted as
$\tau_{k}$'s change. The most simple solution for this complication is to
allow considering all the internal variables regardless of whether they have
equilibrated or not. The fact that $\mathbf{A}=0$ for equilibrated internal
variables means that their contribution to $d_{\text{i}}W$ will vanish so they
will not affect the Gibbs fundamental relation. Despite this, as we will see
later in Sec. \ref{Sec-Fictive}, these internal variables leave their mark in
relaxation. Therefore, from now on, we will consider the entire set
$\mathbf{Z}$ in the thermodynamic approach.

\subsection{First Law}

The infinitesimal heat exchange between the medium $\widetilde{\Sigma}$ and
the system $\Sigma$ will be denoted by $d_{\text{e}}Q(t)$; similarly, the
infinitesimal work done on $\Sigma$ by $\widetilde{\Sigma}$ will be denoted by
$d_{\text{e}}W(t)$. The subscript "e" is a reminder of the exchange. Then the
first law of thermodynamics is written as%
\begin{equation}
dE(t)\equiv d_{\text{e}}Q(t)-d_{\text{e}}W(t) \label{Standard_Heat_Sum}%
\end{equation}
in terms of \emph{exchange heat and work} $d_{\text{e}}Q(t)=T_{0}d_{\text{e}%
}S(t)$ and $d_{\text{e}}W(t)=P_{0}dV(t)$, respectively; see Sec.
\ref{Sec-Notation}. If there are other kinds of exchange work such as due to a
magnetic field, an exchange of particles, etc. they can be subsumed in
$d_{\text{e}}W(t)$. However, for simplicity, we will assume only the
pressure-volume work in this work. Both quantities are controlled from outside
the system. If the pressure $P(t)$ of the system is different from the
external pressure $P_{0}$ of the medium, then their difference gives rise to
the internal work $d_{\text{i}}^{V}W(t)\doteq\left(  P(t)-P_{0}\right)
dV(t)$, which is dissipated within the system; we have added a superscript as
a reminder that this particular internal work is due to volume variation. If
there are internal variables, they do not contribute to $d_{\text{e}}W(t)$ as
the corresponding EQ affinity $\mathbf{A}_{0}=0$. Despite this, the internal
variable $\boldsymbol{\xi}$ does internal work given by $d_{\text{i}%
}^{\boldsymbol{\xi}}W(t)\doteq\mathbf{A}(t)\mathbf{\cdot}d\boldsymbol{\xi}(t)$
and must be added to the internal work due to pressure difference. We thus
identify the internal work $d_{\text{i}}W(t)$ as%
\begin{equation}
d_{\text{i}}W(t)\doteq\left(  P(t)-P_{0}\right)  dV(t)+\mathbf{A}%
(t)\mathbf{\cdot}d\boldsymbol{\xi}(t), \label{Irreversible-Work}%
\end{equation}
and the net work is
\begin{equation}
dW(t)=d_{\text{e}}W(t)+d_{\text{i}}W(t)=P(t)dV(t)+\mathbf{A}(t)\mathbf{\cdot
}d\boldsymbol{\xi}(t), \label{GeneralizedWork}%
\end{equation}
a quantity that depends only on $\Sigma$ and is oblivious to the properties
of$\ \widetilde{\Sigma}$. Such a quantity is called a system-intrinsic (SI)
quantity. Introducing a new quantity \cite{Gujrati-II,Gujrati-Entropy2}%
\begin{equation}
d_{\text{i}}Q(t)\equiv d_{\text{i}}W(t),
\label{Irreversible_Heat_Work_equality}%
\end{equation}
and the net heat
\begin{equation}
dQ(t)\doteq d_{\text{e}}Q(t)+d_{\text{i}}Q(t), \label{GeneralizedHeat}%
\end{equation}
we can write the first law as%
\begin{equation}
dE(t)=dQ(t)-dW(t). \label{First_Law}%
\end{equation}
As $dE(t)$ and $dW(t)$ are both SI-quantities, $dQ(t)$ must also be a
SI-quantity. Thus, the above formulation of the first law is in terms of
quantities that refer to the system. There are no quantities that refer to
$\widetilde{\Sigma}$. We will call $dQ(t)$ and $dW(t)$ as the
\emph{generalized heat} $dQ(t)$ \emph{added to} and the \emph{generalized
work} $dW(t)$ \emph{done by} the system \cite{Gujrati-I,Gujrati-II}. We will
reserve \emph{exchange heat }and\emph{ work} for $d_{\text{e}}Q(t)=T_{0}%
d_{\text{e}}S(t)$ and $d_{\text{e}}W(t)=P_{0}dV(t)$, respectively, throughout
this work; see Sec. \ref{Sec-Notation}. Remembering this, we will also call
generalized heat and work as simply \emph{heat} and \emph{work}, respectively,
for brevity.

\subsection{Second Law}

The second law states that the irreversible (denoted by a suffix i) entropy
$d_{\text{i}}S$ generated in any infinitesimal physical process going on
within a system satisfies the inequality%
\begin{equation}
d_{\text{i}}S\geq0; \label{Second_Law_Inequality}%
\end{equation}
the equality occurs for a reversible process. For the isolated system
$\Sigma_{0}$, we must have, see Eq. (\ref{Isolated-system-irreversibility})
\begin{equation}
dS_{0}=d_{\text{i}}S_{0}=d_{\text{i}}S\geq0. \label{Second_Law}%
\end{equation}

As the thermodynamic entropy is not measurable except when the process is
reversible, the second law remains useless as a computational tool. In
particular, it says nothing about the rate at which the irreversible entropy
increases. Therefore, it is useful to obtain a computational formulation of
the entropy, the \emph{statistical entropy}. This will be done in the next
section. The onus is on us to demonstrate that the statistical entropy also
satisfies this law if it is to represent the thermodynamic entropy. This by
itself does not prove that the two are the same. It has not been possible to
show that the statistical entropy is identical to the thermodynamic entropy in
general. Here, we show their equivalence only when the NEQ thermodynamic
entropy is a \emph{state function} of NEQ state variables to be introduced below.

\subsection{Internal Equilibrium Thermodynamic}

\ For a body in internal equilibrium, its entropy $S$ is a function of $E,V$
and $\boldsymbol{\xi}$. Introducing the corresponding fields%
\begin{align}
\left(  \partial S/\partial E\right)    & =\beta(t)\doteq1/T(t),\left(
\partial S/\partial V\right)  =\beta(t)P(t),\nonumber\\
\left(  \partial S/\partial\mathbf{\xi}\right)    & =\beta(t)\mathbf{A}%
(t),\label{Fields_Body}%
\end{align}
we can write down the differential%
\[
dS(t)=\beta(t)[dE(t)+P(t)dV(t)+\mathbf{A}(t)\mathbf{\cdot}d\boldsymbol{\xi
}(t)],
\]
which can be inverted to express $dE(t)$ as follows:%
\begin{equation}
dE(t)=T(t)dS(t)-P(t)dV(t)-\mathbf{A}(t)\mathbf{\cdot}d\boldsymbol{\xi
}(t).\label{Gibbs_Fundamental_Equation}%
\end{equation}
Comparing with Eq. (\ref{First_Law}), we conclude an identity%
\begin{equation}
dQ(t)=T(t)dS(t),\label{Def-dQ}%
\end{equation}
regardless of the number of internal variables are used to describe $\Sigma$.

We now write $dQ=T_{0}d_{\text{e}}S(t)+T_{0}d_{\text{i}}S(t)+[T(t)-T_{0}%
]dS(t)=d_{\text{e}}Q(t)+T_{0}d_{\text{i}}S(t)+[T(t)-T_{0}]dS(t)$. From this
and using Eq. (\ref{Irreversible_Heat_Work_equality}), we conclude that%
\begin{align}
d_{\text{i}}Q(t)  & =T_{0}d_{\text{i}}S(t)+[T(t)-T_{0}]dS(t)\\
& =Td_{\text{i}}S(t)+[T(t)-T_{0}]d_{\text{e}}S(t)=d_{\text{i}}%
W(t),\label{diQ-diW}%
\end{align}
which can be used to express $d_{\text{i}}S(t)$ as follows
\begin{align}
T_{0}d_{\text{i}}S(t) &  =[T_{0}-T(t)]dS(t)+[P(t)-P_{0}]dV(t)\nonumber\\
&  +\mathbf{A}(t)\cdot d\boldsymbol{\xi}%
(t);\label{Irreversible-contributions1}\\
Td_{\text{i}}S(t) &  =[T_{0}-T(t)]d_{\text{e}}S(t)+[P(t)-P_{0}%
]dV(t)\nonumber\\
&  +\mathbf{A}(t)\cdot d\boldsymbol{\xi}%
(t).\label{Irreversible-contributions2}%
\end{align}
Since $dS(t),d_{\text{e}}S(t),dV(t)$ and $d\boldsymbol{\xi}(t)$ are
independent variations, each of the three contributions on the right side in
each equation must be non-negative
\begin{subequations}
\label{Irreversible-entropy-contributions}%
\begin{align}
\lbrack T_{0}-T(t)]dS(t) &  \geq0,\label{Irreversible-entropy-contributions1}%
\\
\lbrack T_{0}-T(t)]d_{\text{e}}S(t) &  \geq
0,\label{Irreversible-entropy-contributions2}\\
\lbrack P(t)-P_{0}]dV(t) &  \geq0,\label{Irreversible-entropy-contributions3}%
\\
\mathbf{A}(t)\cdot d\boldsymbol{\xi}(t) &  \geq
0,\label{Irreversible-entropy-contributions4}%
\end{align}
to comply with the second law requirement $d_{\text{i}}S(t)\geq0$; we are
assuming $T_{0}$ and $T(t)$ are positive. The factors $T_{0}-T(t),$
$P(t)-P_{0}$ and $\mathbf{A}(t)$\ in front of the extensive variations are the
corresponding \emph{thermodynamic forces} that act to bring the system to
equilibrium. In the process, each force has its own irreversible entropy
generation \cite{Gujrati-II}. The last inequality implies that each
independent component $\xi_{k}\in\boldsymbol{\xi}$ must satisfy $A_{k}%
(t)d\xi_{k}(t)\geq0$. There will be no irreversible entropy generation and the
equalities occur when thermodynamic forces vanish, which is the situation for
a reversible process.

It should be noted that Eq. (\ref{Irreversible-entropy-contributions2}) simply
states that heat exchanges (flows) from hot to cold. To see this, we use the
equality $d_{\text{e}}S(t)=d_{\text{e}}Q(t)/T_{0}$ to rewrite the equation as
$[T_{0}-T(t)]d_{\text{e}}Q(t)\geq0$. If $T_{0}>T(t)$, heat is exchanged to the
system; if $T_{0}<T(t)$, heat is exchanged from the system.

It follows from the last two inequalities in Eq.
(\ref{Irreversible-entropy-contributions}) that
\end{subequations}
\begin{equation}
d_{\text{i}}W(t)\geq0. \label{IrreversibleWork}%
\end{equation}
This means that $d_{\text{i}}W(t)$ truly represents irreversibility or
dissipation within the system. We note that while each term in $d_{\text{i}%
}W(t)$ is non-negative, this is not so for $d_{\text{i}}Q(t)$ written in the
form
\begin{subequations}
\label{Irreversible_Heat_0}%
\begin{align}
d_{\text{i}}Q(t)  &  =T_{0}d_{\text{i}}S(t)+[T(t)-T_{0}%
]dS(t),\label{Irreversible_Heat_01}\\
&  =T(t)d_{\text{i}}S(t)+[T(t)-T_{0}]d_{\text{e}}S(t)
\label{Irreversible_Heat_02}%
\end{align}
in which the first term is non-negative, but the second term is non-positive.
This not only means that \emph{the physics of }$d_{\text{i}}Q(t)$\emph{ and
}$d_{\text{i}}S(t)$\emph{ is very different} but also that
\end{subequations}
\begin{equation}
d_{\text{i}}Q(t)\leq T_{0}d_{\text{i}}S(t)\,,dQ(t)\leq T_{0}dS(t);
\label{Heat-inequalities}%
\end{equation}
the equalities occur only for isothermal ($T=T_{0}$) or adiabatic ($dS=0$) processes.

Let us consider the Helmholtz free energy $H(S,V,\boldsymbol{\xi}%
\mathbf{,}P_{0})=E(S,V,\boldsymbol{\xi})+P_{0}V(t)$
\cite{Gujrati-I,Gujrati-II} in terms of the external pressure $P_{0}$ of the
medium. We can treat $HH(S,V,\boldsymbol{\xi}\mathbf{,}P_{0})$ as an
SI-quantity by treating $P_{0}$ as a parameter. It is easy to see that%
\begin{equation}
dH=TdS(t)-[P(t)-P_{0}]dV(t)-\mathbf{A}(t)\cdot d\boldsymbol{\xi}%
(t)+V(t)dP_{0}. \label{Enthalpy_variation0}%
\end{equation}
The above differential clearly shows that the enthalpy $H$ is a function of
$S,V,\boldsymbol{\xi}$\textbf{,} and $P_{0}$. Recall that for an EQ state,
$H(S,P_{0})$ is not a function of $V$ so it is a Legendre transform of
$E(S,V)$ with respect to $V(t)$. In other words, $\partial H/\partial V=0$.
What we see from above that, for a NEQ states, $H$ is not a Legendre transform
of $E$ with respect to $V$. This is clearly seen by evaluating%
\[
\partial H/\partial V=P(t)-P_{0}\neq0,
\]
as the pressure difference need not vanish in an irreversible process. Despite
this, $dH$ has no irreversible component as we easily find that%
\begin{equation}
dH=TdS(t)-d_{\text{i}}W(t)+V(t)dP_{0}=d_{\text{e}}Q+V(t)dP_{0},
\label{Enthalpy_variation}%
\end{equation}
regardless of the number and nature of the internal variables; we have used
here Eqs. (\ref{Def-dQ}) and (\ref{Irreversible_Heat_Work_equality}). Thus,
$dH$ only contains exchange quantities as both terms on the right side are
controllable from outside the system. As such, it does not have any
spontaneous or irreversible relaxation. For an isobaric process, $dP_{0}=0$ so
$dH$ reduces to
\begin{equation}
dH=d_{\text{e}}Q. \label{Enthalpy_variation-Isobaric}%
\end{equation}
The above equality, which is well known for a reversible process, remains
valid no matter how irreversible a process is. Thus, it must remain valid for
supercooled liquids and glasses. Observe that just as $d_{\text{i}}E=0$, see
Eq. (\ref{EnergyExchange}), so is $d_{\text{i}}H=0$, with $d_{\text{e}%
}H=dH=d_{\text{e}}Q$.

Let us now consider the Gibbs free energy $G(S,V,\boldsymbol{\xi}%
\mathbf{,}T_{0},P_{0})\doteq E(S,V,\boldsymbol{\xi})-T_{0}S(t)+P_{0}V(t)$
\cite{Gujrati-I,Gujrati-II} in terms of the external temperature $T_{0}$ and
pressure $P_{0}$ of the medium. As is the case with the enthalpy, the Gibbs
free energy is also not a Legendre transform of $E(S,V,\boldsymbol{\xi})$ with
respect to $S(t)$ and $V(t)$. We find that%
\begin{align}
dG  & =[T(t)-T_{0}]dS(t)-d_{\text{i}}W(t)-S(t)dT_{0}+V(t)dP_{0}\nonumber\\
& =-T_{0}d_{\text{i}}S(t)-S(t)dT_{0}+V(t)dP_{0},\label{Gibbs_Variation}%
\end{align}
in which the first term can be identified as $d_{\text{i}}G\doteq
-T_{0}d_{\text{i}}S(t)$ and the remainder as $d_{\text{e}}G\doteq
-S(t)dT_{0}+V(t)dP_{0}$. At fixed $T_{0}$ and $P_{0}$, we have
\[
dG=d_{\text{i}}G=-T_{0}d_{\text{i}}S(t)\leq0,
\]
showing that the Gibbs free energy decreases during spontaneous relaxation
such as on a glass.%
\begin{figure}
[ptb]
\begin{center}
\includegraphics[
trim=0.000000in 0.000000in 0.798896in 0.000000in,
height=1.932in,
width=3.8458in
]%
{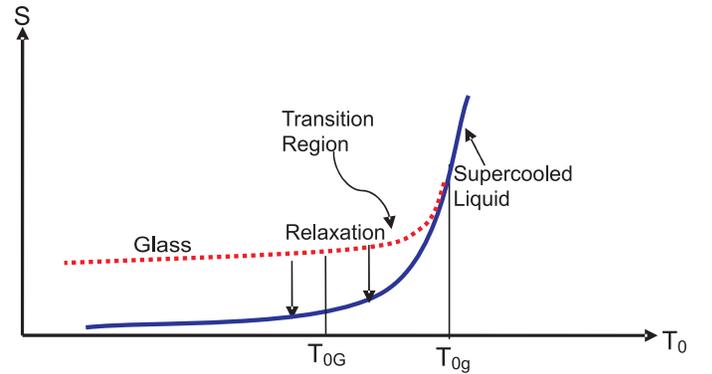}%
\caption{Schematic behavior of the entropy: equilibrated supercooled liquid
(solid curve) and a glass\ (dotted curve) during vitrification as a function
of the temperature $T_{0}$ of the medium. Structures appear to freeze at and
below $T_{0\text{G}}$; see text. The transition region between $T_{0\text{g}}$
and $T_{0\text{G}}$ over which the liquid turns into a glass has been
exaggerated to highlight the point that the glass transition is not a sharp
point.\ For all $T_{0}<T_{0\text{g}}$, the system undergoes isothermal (fixed
$T_{0}$) structural relaxation in time towards the supercooled liquid shown by
the downwards arrows. The entropy of the supercooled liquid is shown to
extrapolate to zero, but that of the glass to a positive value $S_{\text{R}}$
at absolute zero per our assumption. }%
\label{Fig_entropyglass}%
\end{center}
\end{figure}

\section{Entropy Bound during Vitrification\label{Sec-Vitrification}}

We now apply the IEQ thermodynamics of the last section to the vitrification
process, is carried out at some cooling rate as follows. The discussion in
this section is an elaboration and extension of our earlier discussion
\cite{Gujrati-Relaxation1,Gujrati-Relaxation2,Gujrati-Relaxation3,Gujrati-Relaxation4,Gujrati-Relaxation5,Gujrati-Entropy2}
and follows the approach first used by Bestul and Chang \cite{Bestul} and
later by Sethna and coworkers \cite{Sethna-Paper}. The temperature of the
medium is isobarically changed by some small but fixed $\Delta T_{0}$ from the
current value to the new value, and we wait for (not necessarily fixed) time
$\tau_{\text{obs}}$ at the new temperature to make an instantaneous
measurement on the system before changing the temperature again. At some
temperature $T_{0\text{g}}$, the relaxation time $\tau_{0}=\tau_{\text{eq}}$,
which continuously increases as the temperature is lowered [see Fig.
(\ref{Fig-Relaxation})], becomes equal to $\tau_{\text{obs}}$ as shown in Fig.
\ref{Fig_entropyglass}. The location of $T_{0\text{g}}$ depends on the rate of
cooling, i.e. on $\tau_{\text{obs}}$, which is clear from the figure. The
crossing $T_{0\text{g}}$ is lower for the broken $\tau_{\text{obs}}$\ than for
the solid $\tau_{\text{obs}}$. There are several other crossings at
$T_{01},T_{02},\cdots$, see Fig. (\ref{Fig-Relaxation}), at which
$\tau_{\text{obs}}$ crosses other relaxation curves for $\tau_{1},\tau
_{2},\cdots$, respectively. The crossing again depends on whether we take the
solid or the broken curve for $\tau_{\text{obs}}$. Let $T_{0\text{R}%
}>T_{0\text{G}}$ denote the temperature of the last such crossing (not shown
in the figure) before $T_{0\text{G}}$. Just below $T_{0\text{g}}$, the
structures are not yet frozen; they "freeze" at a lower temperature
$T_{0\text{G}}$ (not too far from $T_{0\text{g}})$ to form an amorphous solid
with a viscosity $\eta_{\text{G}}\simeq10^{13}$ poise corresponding to some
time scale $t_{\text{G}}$, see Fig. \ref{Fig_entropyglass}. This solid is
identified as a \emph{glass }determined by the choice of $\eta_{\text{G}}$ or
$t_{\text{G}}$. At $T_{0\text{G}}$, the relaxation time $\tau_{\text{R}}$ is
at least $\tau_{\text{G}}$.\ Over the glass transition region between
$T_{0\text{G}}$ and $T_{0\text{g}}$ in Fig. \ref{Fig_entropyglass}, the NEQ
liquid gradually turns from an EQ supercooled liquid at or above
$T_{0\text{g}}$ into a glass at or below $T_{0\text{G}}$, a picture already
known since Tammann \cite{Nemilov-Book}; see also \cite{Zhao}. Over this
region, some dynamical properties such as the viscosity vary continuously but
very rapidly. However, thermodynamic quantities such as the volume or the
enthalpy change continuously but slowly. As is evident from Fig.
(\ref{Fig-Relaxation}), more and more internal variables become active as the
temperature is reduced and will determine the thermodynamics in this region.
Below $T_{0\text{G}}$, all of these are almost "frozen" except those in the
inactive zone such as $\boldsymbol{\xi}_{\text{v}}$ corresponding to the
relaxation time $\tau_{\text{v}}$, representing local localized oscillations
within cells in the cell model \cite{Zallen}; see the discussion in Secs.
\ref{Sec-Fictive} and \ref{Sec-Conclusions}.

As the observation time $\tau_{\text{obs}}$ is increased, the equilibrated
supercooled liquid continues to lower temperatures before the appearance of
$T_{0\text{g}}$. In the \emph{hypothetical limit} $\tau_{\text{obs}%
}\rightarrow\infty$, it is believed that the equilibrated supercooled liquid
will continue to lower temperatures without any interruption, and is shown
schematically by the solid blue curve in Fig. \ref{Fig_entropyglass}. We
overlook the possibility of the supercooled liquid ending in a spinodal that
has been seen theoretically \cite{Gujrati-spinodal}. It is commonly believed
that this entropy will vanish at absolute zero ($S_{\text{SCL}}(0)\equiv0$),
as shown in the figure. As we are going to be interested in $S_{\text{SCL}%
}(T_{0})$ over $(0,T_{0\text{g}})$, we must also acknowledge the possibility
of an ideal glass transition in the system. If one believes in an ideal glass
transition, then there would be a singularity in $S_{\text{SCL}}(T_{0})$ at
some positive temperature $T_{\text{K}}<T_{0\text{G}},$ below which the system
will turn into an ideal glass whose entropy will also vanish at absolute zero
\cite[see also references cited there]{Gujrati-book}. The possibility of an
ideal glass transition, which has been discussed in a recent review elsewhere
\cite{Gujrati-book}, will not be discussed further in this work. All that will
be relevant in our discussion here is the fact that the entropy vanishes in
both situations ($S_{\text{SCL}}(0)\equiv0$). However, it should be emphasized
that the actual value of $S_{\text{SCL}}(0)$ has no relevance for the theorems
we derive below.

It is a common practice to think of the glass transition to occur at a point
that lies between $T_{0\text{g}}$ and $T_{0\text{G}}$.\ We have drawn entropy
curves (Glass and SCL) in Fig. \ref{Fig_entropyglass} for a process of
vitrification in a cooling experiment. The entropy curves $S_{\text{g}}%
(T_{0},t)\ $for Glass emerges out of $S_{\text{SCL}}(T_{0})$ at $T_{0\text{g}%
}$ for a given $\tau_{\text{obs}}$ in such a way that it lies above\ that of
SCL for $T_{0\text{g}}>T_{0}\geq0$. At any nonzero temperature $T_{0}$,
$S(T_{0},t)$ approach $S_{\text{SCL}}(T_{0})$ from above during
\emph{isothermal (fixed temperature of the medium) relaxation}; see the two
downward vertical arrows. These relaxations are discussed in the next section.

The concept of internal equilibrium is also a common practice now-a-days for
glasses \cite{Gutzow-Book,Nemilov-Book}. Employing the concept of internal
equilibrium provides us with an instantaneous Gibbs fundamental relation, see
Eq. (\ref{Gibbs_Fundamental_Equation}), which determines instantaneous
temperature, pressure, etc. of the system.

We now prove the entropy bounds%
\begin{equation}
S_{\text{R}}\equiv S(0)>S_{\text{expt}}(0)>S_{\text{SCL}}(0).
\label{ResidualEntropy_Bound}%
\end{equation}
in the form of Theorems \ref{Theorem_Lower_Bound} and
\ref{Theorem_Lower_Bound_SCL}. We will only consider isobaric cooling (we will
not explicitly exhibit the pressure in this section), which is the most
important situation for glasses. The process is carried out along some path
from an initial state A at temperature $T_{0\text{A}}$ in the supercooled
liquid state which is still higher than $T_{0\text{g}}$ to the state A$_{0}$
at absolute zero. The state A$_{0}$ depends on the path A$\rightarrow$A$_{0}$,
which is implicit in the following. The change $dS$ between two neighboring
points along such a path is
\cite{Donder,deGroot,Prigogine,Gujrati-I,Gujrati-II} $dS=d_{\text{e}%
}S+d_{\text{i}}S$; for a NEQ system, the two parts of $dS$ are path dependent.
The component
\begin{equation}
d_{\text{e}}S(t)=-d_{\text{e}}Q(t)/T_{0}\equiv C_{P}dT_{0}/T_{0}
\label{Heat_Capacity_Relation}%
\end{equation}
represents the reversible entropy exchange with the medium in terms of the
heat $d_{\text{e}}Q(t)$ given out by the glass at time $t$ to the medium whose
temperature at that instant is $T_{0}$. The component $d_{\text{i}}S>0$
represents the irreversible entropy generation in the irreversible process;
see Eq. (\ref{Second_Law_Inequality}). In general, it contains, in addition to
the contribution from the irreversible heat transfer with the medium,
contributions from all sorts of viscous dissipation going on \emph{within} the
system and normally require the use of internal variables
\cite{Donder,deGroot,Prigogine,Gujrati-I,Gujrati-II}. The equality in Eq.
(\ref{Second_Law_Inequality}) holds for a reversible process, which we will no
longer consider unless stated otherwise. The strict inequality $d_{\text{i}%
}S>0$ occurs only for an irreversible process such as in a glass.

\begin{theorem}
\label{Theorem_Lower_Bound}The experimentally observed (extrapolated) non-zero
entropy $S_{\text{expt}}(0)$ at absolute zero in a vitrification process is a
\emph{strict lower bound of the residual entropy} of any system:%
\[
S_{\text{R}}\equiv S(0)>S_{\text{expt}}(0).
\]

\end{theorem}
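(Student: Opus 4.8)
The plan is to track the difference between the true NEQ entropy $S(0)=S_{\text{R}}$ and the calorimetrically inferred entropy $S_{\text{expt}}(0)$ along the isobaric cooling path $\text{A}\rightarrow\text{A}_{0}$, and to show that this difference equals exactly the irreversible entropy accumulated during vitrification. The essential device is to anchor both entropies at the same high-temperature endpoint: at the initial state A with $T_{0\text{A}}>T_{0\text{g}}$ the system is the equilibrated supercooled liquid, an EQ state whose entropy $S_{\text{SCL}}(T_{0\text{A}})$ is a genuine state function. There the experimental and true entropies coincide, since in equilibrium there is no irreversible contribution to distinguish them.

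First I would integrate the exact differential $dS=d_{\text{e}}S+d_{\text{i}}S$ along the path to obtain the true value
\[
S(0)=S_{\text{SCL}}(T_{0\text{A}})+\int_{\text{A}}^{\text{A}_{0}}d_{\text{e}}S+\int_{\text{A}}^{\text{A}_{0}}d_{\text{i}}S.
\]
Next I would identify precisely what calorimetry records: an experimenter measures only the heat $d_{\text{e}}Q$ exchanged with the medium and reconstructs the entropy through the Clausius relation, which by Eq. (\ref{Heat_Capacity_Relation}) reproduces just the exchange part $d_{\text{e}}S=C_{P}dT_{0}/T_{0}$. Hence the extrapolated experimental entropy carries no irreversible term,
\[
S_{\text{expt}}(0)=S_{\text{SCL}}(T_{0\text{A}})+\int_{\text{A}}^{\text{A}_{0}}d_{\text{e}}S.
\]
Subtracting the two expressions cancels the common anchor $S_{\text{SCL}}(T_{0\text{A}})$ and the common exchange integral, leaving
\[
S_{\text{R}}-S_{\text{expt}}(0)=\int_{\text{A}}^{\text{A}_{0}}d_{\text{i}}S.
\]

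Finally I would invoke the second law in its strict form. Because vitrification below $T_{0\text{g}}$ is genuinely irreversible, the strict inequality $d_{\text{i}}S>0$ of Eq. (\ref{Second_Law_Inequality}) holds over at least part of the cooling path rather than the reversible equality, so the accumulated integral is strictly positive and $S_{\text{R}}>S_{\text{expt}}(0)$ follows for any system. The main obstacle is conceptual rather than computational: one must justify that the calorimetric reconstruction captures exactly $\int d_{\text{e}}S$ while being structurally blind to $\int d_{\text{i}}S$, and that both entropies share the identical anchor at A. Once the exchange/irreversible decomposition of Sec. \ref{Sec_General_Consideration} is granted, the inequality is immediate; the real subtlety lies in recognizing that the experimentally \emph{extrapolated entropy at absolute zero} is not the thermodynamic entropy itself but only its exchange surrogate, so that the missing $\int d_{\text{i}}S$ is precisely the gap $S_{\text{R}}-S_{\text{expt}}(0)$.
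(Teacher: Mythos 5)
Your proposal is correct and follows essentially the same route as the paper: decompose $dS=d_{\text{e}}S+d_{\text{i}}S$ along the cooling path anchored at the equilibrated supercooled liquid, identify the calorimetric entropy with the exchange integral $\int C_{P}\,dT_{0}/T_{0}$, and obtain $S_{\text{R}}-S_{\text{expt}}(0)=\int d_{\text{i}}S>0$ from the strict positivity of the irreversible entropy generation for the NEQ glass with $\tau_{\text{obs}}<\tau_{\text{eq}}(T_{0})$. Your explicit remark that the strictness requires genuine irreversibility (not merely $d_{\text{i}}S\geq0$) matches the paper's own caveat.
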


\begin{proof}
We have along A$\rightarrow$A$_{0}$%
\begin{equation}
S(0)=S(T_{0})+%
{\textstyle\int\limits_{\text{A}}^{\text{A}_{0}}}
d_{\text{e}}S+%
{\textstyle\int\limits_{\text{A}}^{\text{A}_{0}}}
d_{\text{i}}S,\label{General_Entropy_Calculation}%
\end{equation}
where we have assumed that there is no latent heat in the vitrification
process. The first integral is easily determined experimentally since it is
expressible in terms of the exchange heat%
\[%
{\textstyle\int\limits_{\text{A}}^{\text{A}_{0}}}
d_{\text{e}}S=-%
{\textstyle\int\limits_{\text{A}}^{\text{A}_{0}}}
\frac{d_{\text{e}}Q}{T_{0}}.
\]
The second integral in Eq. (\ref{General_Entropy_Calculation}) is always
\emph{positive}, but almost impossible to measure as it involves thermodynamic
forces; see Eq. (\ref{Irreversible-entropy-contributions1}):%
\begin{equation}%
{\textstyle\int\limits_{\text{A}}^{\text{A}_{0}}}
d_{\text{i}}S=\int\limits_{\text{A}}^{\text{A}_{0}}\frac{\left\{
\begin{array}
[c]{c}%
\lbrack T_{0}-T(t)]dS(t)+[P(t)-P_{0}]dV(t)\\
+\mathbf{A}(t)\cdot d\boldsymbol{\xi}(t)
\end{array}
\right\}  }{T_{0}}>0.\label{Irreversible Entropy loss}%
\end{equation}
It involves knowing and since the residual entropy $S_{\text{R}}$ is, by
definition, the entropy $S(0)$ at absolute zero, we obtain the important
result%
\begin{equation}
S_{\text{R}}\equiv S(0)>S_{\text{expt}}(0)\doteq S(T_{0\text{A}})+%
{\textstyle\int\limits_{T_{0\text{A}}}^{0}}
C_{P}dT_{0}/T_{0}.\label{Residual_Entropy_determination}%
\end{equation}
This proves Theorem \ref{Theorem_Lower_Bound}.
\end{proof}

The irreversibility during vitrification does not allow for the determination
of the entropy exactly, because evaluating the integral in Eq.
(\ref{Irreversible Entropy loss}) is not feasible
\cite{Gujrati-II,Nemilov-Book}. The forward inequality%
\[
S_{\text{R}}-S_{\text{expt}}(0)=%
{\textstyle\int\limits_{\text{A}}^{\text{A}_{0}}}
d_{\text{i}}S>0
\]
is due to the irreversible entropy generation from all possible sources
\cite{Donder,deGroot,Prigogine,Gujrati-I,Gujrati-II}. The inequality is made
strict as we are treating the NEQ glass with $\tau_{\text{obs}}<\tau
_{\text{eq}}(T_{0})$ and clearly establishes that the residual entropy at
absolute zero must be strictly larger than the "experimentally or
calorimetrically measured" $S_{\text{expt}}(0)$.

\begin{theorem}
\label{Theorem_Lower_Bound_SCL}The calorimetrically measured (extrapolated)
entropy during processes that occur when $\tau_{\text{obs}}<\tau_{\text{eq}%
}(T_{0})$ for any $T_{0}<T_{0\text{g}}$ is larger than the hypothetical
supercooled liquid entropy at absolutely zero
\[
S_{\text{expt}}(0)>S_{\text{SCL}}(0).
\]

\end{theorem}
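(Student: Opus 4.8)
The plan is to reduce the claim to a comparison of two calorimetric integrals over the same cooling interval, and then to show that the integrand has a definite sign. First I would write both entropies as integrals from the common initial state A at $T_{0\text{A}}>T_{0\text{g}}$. For the glass branch, Theorem \ref{Theorem_Lower_Bound} already supplies, through Eq. (\ref{Residual_Entropy_determination}), the exchange (measurable) part
\[
S_{\text{expt}}(0)=S(T_{0\text{A}})+\int_{T_{0\text{A}}}^{0}C_{P}\,dT_{0}/T_{0}.
\]
For the supercooled liquid, the process with $\tau_{\text{obs}}\geq\tau_{\text{eq}}$ is quasistatic, so $d_{\text{i}}S=0$ and its entropy is a genuine state function; hence the identical calorimetric expression is \emph{exact},
\[
S_{\text{SCL}}(0)=S_{\text{SCL}}(T_{0\text{A}})+\int_{T_{0\text{A}}}^{0}C_{P}^{\text{SCL}}\,dT_{0}/T_{0}.
\]
Because at and above $T_{0\text{g}}$ the system is simply the equilibrated supercooled liquid, the two branches share the same state at A, so $S(T_{0\text{A}})=S_{\text{SCL}}(T_{0\text{A}})$ and the initial terms cancel upon subtraction.

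Subtracting and reversing the limits gives
\[
S_{\text{expt}}(0)-S_{\text{SCL}}(0)=\int_{0}^{T_{0\text{A}}}\frac{C_{P}^{\text{SCL}}(T_{0})-C_{P}(T_{0})}{T_{0}}\,dT_{0}.
\]
Above $T_{0\text{g}}$ both heat capacities are those of the same supercooled liquid, so the integrand vanishes there and the integral reduces to the window $0<T_{0}<T_{0\text{g}}$. It then remains only to show that $C_{P}^{\text{SCL}}(T_{0})\geq C_{P}(T_{0})$ on this window, with strict inequality over a subinterval.

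The crux, and the step I expect to be the main obstacle, is establishing this heat-capacity inequality from the hierarchy of relaxation times rather than merely quoting the empirical specific-heat drop at $T_{0\text{g}}$. The argument I would use is that the hypothesis $\tau_{\text{obs}}<\tau_{\text{eq}}(T_{0})$ for $T_{0}<T_{0\text{g}}$ places the structural (configurational) internal variables in the inactive/frozen zone of Fig. \ref{Fig-Relaxation}: their affinities have not relaxed over $\tau_{\text{obs}}$, so during a measurement these degrees of freedom neither absorb nor release heat. Consequently the exchange heat $d_{\text{e}}Q$ of the glass lacks the configurational contribution that is fully present in the relaxing supercooled liquid, and since $d_{\text{e}}S=C_{P}\,dT_{0}/T_{0}$ (Eq. (\ref{Heat_Capacity_Relation})) the glass $C_{P}=-d_{\text{e}}Q/dT_{0}$ retains only the active (e.g.\ vibrational $\boldsymbol{\xi}_{\text{v}}$) response, whereas $C_{P}^{\text{SCL}}$ carries the full configurational one; hence $C_{P}^{\text{SCL}}\geq C_{P}$. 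Because $\tau_{\text{obs}}<\tau_{\text{eq}}$ is assumed to hold strictly over a range of $T_{0}$ below $T_{0\text{g}}$, at least one internal variable is genuinely frozen there, making the inequality strict on that subinterval.

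Combining these facts, the integrand is non-negative and strictly positive on a set of positive measure, so the integral is strictly positive and $S_{\text{expt}}(0)>S_{\text{SCL}}(0)$, which completes the proof. The delicate point to argue carefully is precisely the strictness together with the claim that frozen internal variables make no calorimetric contribution: this is where the active/inactive partition relative to $\tau_{\text{obs}}$ constructed in Sec. \ref{Sec-Hierarchy} does the real work, and where an over-hasty appeal to the observed heat-capacity jump would leave the theorem resting on experiment rather than on the second-law framework developed above.
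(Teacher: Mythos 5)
Your overall reduction is the same as the paper's: both arguments compare, temperature step by temperature step below $T_{0\text{g}}$, the heat actually surrendered by the glass over $\tau_{\text{obs}}$ with the heat the equilibrated supercooled liquid would surrender, note that the two coincide at and above $T_{0\text{g}}$, and conclude $S_{\text{expt}}(0)>S_{\text{SCL}}(0)$ by accumulating the strict per-step deficits in the entropy loss. Where you diverge is in how the per-step inequality $C_{P}\leq C_{P}^{\text{SCL}}$ is established, and this is exactly the step you flag as the obstacle. The paper does it without any decomposition of $C_{P}$ into contributions from different degrees of freedom: writing $\dot{Q}_{\text{e}}(t)=d_{\text{e}}Q(t)/dt$ for the isothermal relaxation at the new $T_{0}$, it observes that $C_{P}\left\vert dT_{0}\right\vert =\int_{0}^{\tau_{\text{obs}}}\left\vert \dot{Q}_{\text{e}}\right\vert dt$, while the supercooled-liquid value is the \emph{same} integral carried out to $\tau_{\text{eq}}(T_{0})$; since the integrand is non-negative throughout the relaxation (heat keeps flowing out because $T(t)>T_{0}$, Eqs. (\ref{Temperature_Behavior}) and (\ref{Irreversible-entropy-contributions2})), truncating at $\tau_{\text{obs}}<\tau_{\text{eq}}(T_{0})$ strictly reduces the recorded heat. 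This buys the inequality, and its strictness, directly from the second law.

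Your substitute for this step has a genuine gap. First, a terminological inversion: in Fig. \ref{Fig-Relaxation} the \emph{inactive} zone is the one lying \emph{below} $\tau_{\text{obs}}$ and contains the variables that have already equilibrated; the frozen structural variables of case Relax3 sit in the \emph{active} zone above $\tau_{\text{obs}}$, so "inactive/frozen" conflates opposites. More substantively, the claim that frozen configurational modes "neither absorb nor release heat" over $\tau_{\text{obs}}$ is too strong and is not derivable from the framework: a variable with $\tau_{k}>\tau_{\text{obs}}$ still relaxes partially and still releases some heat during the observation --- it merely releases \emph{less} than it would if allowed to equilibrate. Asserting a strictly zero contribution would require an additive decomposition of $C_{P}$ over the active/inactive partition that the paper never establishes (it introduces such a partition only later, in Sec. \ref{Sec-Fictive}, and for a different purpose). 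The truncated-time-integral argument makes the whole issue moot: one does not need to know \emph{which} degrees of freedom are responsible for the deficit, only that the cumulative heat loss is monotone in time and that the observation stops before equilibration. If you repair your key step by replacing "contributes nothing" with "contributes less, because the heat-loss integral is cut off early," your proof collapses onto the paper's.
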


\begin{proof}
Let $\dot{Q}_{\text{e}}(t)\equiv d_{\text{e}}Q(t)/dt$ be the rate of net heat
loss by the system during $\tau_{\text{obs}}<\tau_{\text{eq}}(T_{0})$ as it
relaxes isothermally at some fixed $T_{0}$. For each temperature interval
$dT_{0}<0$ below $T_{0\text{g}}$, we have
\begin{align*}
\left\vert d_{\text{e}}Q\right\vert  & \equiv C_{P}\left\vert dT_{0}%
\right\vert =%
{\textstyle\int\limits_{0}^{\tau_{\text{obs}}}}
\left\vert \dot{Q}_{\text{e}}\right\vert dt\\
& <\left\vert d_{\text{e}}Q\right\vert _{\text{eq}}(T_{0})\doteq%
{\textstyle\int\limits_{0}^{\tau_{\text{eq}}(T_{0})}}
\left\vert \dot{Q}_{\text{e}}\right\vert dt,\ \ \ \ \ \ T_{0}<T_{0\text{g}}%
\end{align*}
where $\left\vert d_{\text{e}}Q\right\vert _{\text{eq}}(T_{0})>0$ denotes the
net heat loss by the system to come to equilibrium, i.e. become supercooled
liquid during cooling at $T_{0}$. For $T_{0}\geq T_{0\text{g}}$, $dQ\equiv
dQ_{\text{eq}}(T_{0})\doteq C_{P\text{,eq}}dT_{0}$. Thus, the entropy loss
observed experimentally with $\tau_{\text{obs}}<\tau_{\text{eq}}(T_{0})$ is
less than the entropy loss if the system is allowed to come to SCL at each
temperature $T_{0}$. We thus conclude that
\begin{equation}
S_{\text{expt}}(0)>S_{\text{SCL}}(0).\label{Entropy_bound_at_0}%
\end{equation}

This proves Theorem \ref{Theorem_Lower_Bound_SCL}.
\end{proof}

The strict inequality above is the result of the fact that glass is a NEQ
state. Otherwise, we will have $S_{\text{expt}}(0)\geq S_{\text{SCL}}(0)$ for
any arbitrary state.

The difference $S_{\text{R}}-$ $S_{\text{expt}}(0)$ would be larger, more
irreversible the process is. The quantity $S_{\text{expt}}(0)$ can be
determined calorimetrically by performing a cooling experiment. We take
$T_{0\text{A}}$ to be the melting temperature $T_{0\text{M}}$, and uniquely
determine the entropy of the supercooled liquid at $T_{0\text{M}}$ by adding
the entropy of melting to the crystal entropy $S_{\text{CR}}(T_{0\text{M}})$
at $T_{0\text{M}}$. The latter is obtained in a unique manner by integration
along a reversible path from $T_{0}=0$ to $T_{0}=T_{0\text{M}}$:
\[
S_{\text{CR}}(T_{0\text{M}})=S_{\text{CR}}(0)+%
{\textstyle\int\limits_{0}^{T_{0\text{M}}}}
C_{P\text{,CR}}dT_{0}/T_{0},
\]
here, $S_{\text{CR}}(0)$ is the entropy of the crystal at absolute zero, which
is traditionally taken to be zero in accordance with the third law, and
$C_{P\text{,CR}}(T_{0})$ is the isobaric heat capacity of the crystal. This
then uniquely determines the entropy of the liquid to be used in the right
hand side in Eq. (\ref{Residual_Entropy_determination}). We will assume that
$S_{\text{CR}}(0)=0$. Thus, an experimental determination of $S_{\text{expt}%
}(0)$ is required to give the \emph{lower bound} to the residual entropy in
Eq. (\ref{ResidualEntropy_Bound}). Experimental evidence for a non-zero value
of $S_{\text{expt}}(0)$ is abundant as discussed by several authors
\cite{Giauque,Giauque-Gibson,Jackel,Gutzow-Schmelzer,Nemilov,Johari,Goldstein}%
; various textbooks \cite{Gutzow-Book,Nemilov-Book} also discuss this issue.
Goldstein \cite{Goldstein} gives a value of $S_{\text{R}}\simeq15.1$ J/K mol
for \textit{o-}terphenyl from the value of its entropy at $T_{0}=2$ K.
However, Eq. (\ref{Entropy_bound_at_0}) gives a mathematical justification of
$S_{\text{expt}}(0)>0$. The strict inequality proves immediately that the
residual entropy \emph{cannot} vanish for glasses, which justifies the curve
Glass in Fig. \ref{Fig_entropyglass}. The relevance of the residual entropy
has been discussed by several authors in the literature.
\cite{Pauling,Tolman,Sethna,Chow,Goldstein,Gujrati-Residual,Gujrati-Symmetry,Conradt,Gutzow-Schmelzer,Nemilov}%

By considering the state A$_{0}$ above to be a state A$_{0}$\ of the glass in
a medium at some arbitrary temperature $T_{0}^{\prime}$ below $T_{0\text{g}}$,
we can get a generalization of Eq. (\ref{Residual_Entropy_determination}):%
\begin{equation}
S(T_{0}^{\prime})>S_{\text{expt}}(T_{0}^{\prime})\doteq S(T_{0})+%
{\textstyle\int\limits_{T_{0}}^{T_{0}^{\prime}}}
C_{P}dT_{0}/T_{0}. \label{Entropy_determination}%
\end{equation}
We again wish to remind the reader that all quantities depend on the path
A$\rightarrow$A$_{0}$, which we have not exhibited. By replacing $T_{0}$\ by
the melting temperature $T_{0\text{M}}$ and $T_{0}^{\prime}$\ by $T_{0}$, and
adding the entropy $\widetilde{S}(T_{0\text{M}})$ of the medium on both sides
in the above inequality, and rearranging terms, we obtain (with $S_{\text{L}%
}(T_{0\text{M}})=S_{\text{SCL}}(T_{0\text{M}})$ for the liquid)%
\begin{equation}
S_{\text{L}}(T_{0\text{M}})+\widetilde{S}(T_{0\text{M}})\leq S(T_{0}%
)+\widetilde{S}(T_{0\text{M}})-%
{\textstyle\int\limits_{T_{0\text{M}}}^{T_{0}}}
C_{P}dT_{0}/T_{0}, \label{Setna_Inequality}%
\end{equation}
where we have also included the equality for a reversible process. This
provides us with an independent derivation of the inequality given by Sethna
and coworkers \cite{Sethna-Paper}.

It is also clear from the derivation of Eq. (\ref{Entropy_bound_at_0}) that
the inequality can be generalized to any temperature $T_{0}<T_{0\text{g}}$
with the result%
\begin{equation}
S_{\text{expt}}(T_{0})>S_{\text{SCL}}(T_{0}), \label{Entropy_bound}%
\end{equation}
with $S_{\text{expt}}(T_{0})\rightarrow S_{\text{SCL}}(T_{0})$ as
$T_{0}\rightarrow T_{0\text{g}}$ from below.

While we have only demonstrated the forward inequality, the excess
$S_{\text{R}}-S_{\text{expt}}(0)$ can be computed in NEQ thermodynamics
\cite{Donder,deGroot,Prigogine,Gujrati-I,Gujrati-II}, which provides a clear
prescription for calculating the irreversible entropy generation. The
calculation will, of course, be system-dependent and will require detailed
information. Gutzow and Scmelzer \cite{Gutzow-Schmelzer} provide such a
procedure with a single internal variable but under the assumption of equal
temperature and pressure for the glass and the medium. However, while they
comment that $d_{\text{i}}S\geq0$ whose evaluation requires system-dependent
properties, their main interest is to only show that it is negligible compared
to $d_{\text{e}}S$.

We have proved Theorems \ref{Theorem_Lower_Bound} and
\ref{Theorem_Lower_Bound_SCL}\ by considering only the system without paying
any attention to the medium. For Theorem \ref{Theorem_Lower_Bound}, we require
the second law, i.e. Eq. (\ref{Second_Law_Inequality}). This is also true of
Eq. (\ref{Entropy_determination}). The proof of Theorem
\ref{Theorem_Lower_Bound_SCL} requires the constraint $\tau_{\text{obs}}%
<\tau_{\text{eq}}(T_{0})$ for any $T_{0}<T_{0\text{g}}$, which leads to a NEQ
state. The same is also true of Eq. (\ref{Entropy_bound}).

We have focused on the system in this section. This does not mean that the
conclusion would be any different had we brought the medium into our
discussion. This is seen from the derivation of the inequality in Eq.
(\ref{Setna_Inequality}) from Eq. (\ref{Entropy_determination}).

\section{Entropy and Enthalpy during Isothermal
Relaxation\label{Sect_Relaxation}}

We wish to consider \emph{isothermal} relaxation in an isobaric cooling
experiment carried out at a fixed pressure $P_{0}$. Let us assume that
$\Sigma$ is in equilibrium at some temperature $T_{0}^{\prime}\leq
T_{0\text{g}}$ of some medium $\widetilde{\Sigma}^{\prime}$. We change to a
different medium $\widetilde{\Sigma}$ at $T_{0}<T_{0}^{\prime},P_{0}$ and
bring $\Sigma$ in its contact. Initially, the temperature $T\left(  0\right)
$ of $\Sigma$ is $T\left(  0\right)  =T_{0}^{\prime}>T_{0}$ so it is out of
equilibrium with the new medium and its temperature $T(t)$ will strive to get
closer to $T_{0}$ as we wait for $\Sigma$ to come to equilibrium with
$\widetilde{\Sigma}$; see Eq. (\ref{Temperature_Behavior}). The initial
entropy $S(T_{0},0)=S_{\text{SCL}}(T_{0}^{\prime})>S_{\text{SCL}}(T_{0})$. If
the system is now allowed to equilibrate, it will undergo spontaneous
(isothermal) relaxation at fixed $T_{0}$ so that $S(T_{0},t)\rightarrow
S_{\text{SCL}}(T_{0})$\ in time during which its temperature changes. We
assume that the relaxation times of $\boldsymbol{\xi}_{n}$ as a function of
$T(t)$ is similar to that shown in Fig. \ref{Fig-Relaxation}; all we need to
do is to replace $T_{0}$ by $T(t)$. During relaxation, the entropy of the
glass is supposed to decrease. This is what we expect intuitively as the
arrows show in Fig. \ref{Fig_entropyglass}. We now wish to consider such a
relaxation and determine the behavior of thermodynamic functions such as the
entropy, enthalpy, etc. using IEQ thermodynamics introduced above. We prove
two additional theorems in this section. The theorems are general even though
we have in mind NEQ states including glasses obtained under the condition
$\tau_{\text{obs}}<\tau_{\text{eq}}(T_{0})$ for any $T_{0}<T_{0\text{g}}$. We
consider the system to be in internal equilibrium with temperature $T(t)$,
pressure $P(t)$, etc. We remind the reader that all processes that go on
within the medium occur at constant temperature $T_{0}$, pressure $P_{0}$,
etc. Thus, there will not be any irreversible process going on within the
medium. All irreversible processes will go on within the system.

We will exploit below the \emph{strict} inequalities in Eq.
(\ref{Irreversible-entropy-contributions}) to derive a bound on the rate of
entropy variation. For a system out of equilibrium, the instantaneous entropy
$S(t)$ and volume $V(t)$ seem to play the role \cite{Gujrati-I} of "internal
variables," whose "affinities" are given by the corresponding thermodynamic
forces $T_{0}-T(t)$ and $P(t)-P_{0}$, respectively. This fact is not commonly
appreciated in the glass literature to the best of our knowledge. Even during
an isobaric vitrification, there is no fundamental reason to assume that the
pressure $P$ of the system is always equal to the external pressure $P_{0}$.
However, it is a common practice to assume the two to be the same, which may
not be a poor approximation in most cases. We will not generally make such an
approximation in this work.

We now state Theorem \ref{Theorem_Entropy_Variation}.

\begin{theorem}
\label{Theorem_Entropy_Variation}The entropy of a glass reaches that of the
supercooled liquid from above during relaxation at fixed $T_{0},P_{0}$ of the
mediums. Thus,%
\[
S>S_{\text{SCL}},
\]
so that the entropy variation in time has a unique direction as shown by the
\emph{downward arrows} in Fig. \ref{Fig_entropyglass}.
\end{theorem}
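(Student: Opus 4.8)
The plan is to read the sign of $dS(t)$ directly off the second-law decomposition already established in Eq. (\ref{Irreversible-entropy-contributions}), and then combine the resulting monotonicity with the known terminus of the relaxation. The relevant setup is the one described just above: $\Sigma$ is brought into contact with a medium at $T_{0}<T_{0}^{\prime}$ and relaxes isothermally at fixed $T_{0},P_{0}$ from an initial state whose entropy is $S(T_{0},0)=S_{\text{SCL}}(T_{0}^{\prime})>S_{\text{SCL}}(T_{0})$. The central input is Eq. (\ref{Temperature_Behavior}): throughout any such isothermal relaxation the instantaneous temperature obeys $T(t)>T_{0}$, reaching $T_{0}$ from above only as equilibrium is attained. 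Hence the thermodynamic force $T_{0}-T(t)$ is strictly negative at every intermediate instant.

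First I would apply the inequality (\ref{Irreversible-entropy-contributions1}), $[T_{0}-T(t)]\,dS(t)\geq 0$. Since the prefactor $T_{0}-T(t)$ is strictly negative during relaxation, the inequality forces $dS(t)\leq 0$; and because the process is genuinely irreversible, so that $S(t)$ is not momentarily stationary while $T(t)\neq T_{0}$, this sharpens to $dS(t)<0$. Thus $S(t)$ decreases monotonically in time, which is exactly the direction indicated by the downward arrows in Fig. \ref{Fig_entropyglass}.

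Next I would identify the endpoint. As the relaxation completes, the affinities vanish (Eq. \ref{EQ-Affinity}), the internal variables $\boldsymbol{\xi}$ cease to be independent of $\mathbf{X}$, and $T(t)\rightarrow T_{0}$, so the state coincides with the equilibrium supercooled liquid at $T_{0}$ and its entropy tends to $S_{\text{SCL}}(T_{0})$. A quantity that decreases strictly and monotonically from $S_{\text{SCL}}(T_{0}^{\prime})>S_{\text{SCL}}(T_{0})$ down to the limiting value $S_{\text{SCL}}(T_{0})$ must satisfy $S(t)>S_{\text{SCL}}(T_{0})$ at every finite time and approach it from above. This is precisely the assertion $S>S_{\text{SCL}}$ together with the claimed unique direction of the entropy variation.

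The main obstacle is that the entire argument rests on the premise $T(t)>T_{0}$ of Eq. (\ref{Temperature_Behavior}), which itself follows from the energy decreasing during isothermal relaxation; I would want to confirm this premise is genuinely in force here, as it is, since cooling from $T_{0}^{\prime}$ to $T_{0}$ lowers the energy. A second delicate point is the strictness: inequality (\ref{Irreversible-entropy-contributions1}) by itself yields only $dS(t)\leq 0$, and upgrading to $dS(t)<0$ requires that the entropy not be instantaneously stationary while $T(t)>T_{0}$; this holds because a stationary entropy with $T(t)\neq T_{0}$ would kill that term and is consistent only with all irreversibility having ceased, i.e. with equilibrium already reached. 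Finally I would verify that the terminus is the SCL state and not some other IEQ state, which is guaranteed by the paper's convention (Definition) that SCL is the equilibrium reference with vanishing affinity, $\mathbf{A}_{\text{eq}}=0$.
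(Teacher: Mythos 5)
Your proposal is correct and follows essentially the same route as the paper's own proof: it combines $T(t)>T_{0}$ from Eq.~(\ref{Temperature_Behavior}) with the second-law term $[T_{0}-T(t)]\,dS(t)\geq 0$ of Eq.~(\ref{Irreversible-entropy-contributions1}) to get $dS(t)/dt<0$ during relaxation, and then identifies the terminus as $S_{\text{SCL}}(T_{0})$ so that the entropy is approached from above. Your extra remarks on the strictness of the inequality and on why the endpoint is SCL rather than some other IEQ state are sensible elaborations of steps the paper leaves implicit, not a different argument.
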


\begin{proof}
It follows from Eqs. (\ref{Temperature_Behavior}) and
(\ref{Irreversible-entropy-contributions1}) that for any NEQ state during
relaxation (fixed $T_{0},P_{0}$)%
\begin{equation}
dS(t)/dt<0; \label{Entropy_variation}%
\end{equation}
the inequality turns into an equality once equilibrium is reached. In other
words, during relaxation,%
\[
S(T_{0},P_{0},t)\rightarrow S_{\text{SCL}}^{+}(T_{0},P_{0});
\]
the plus symbol is again to indicate that the glass entropy reaches
$S_{\text{SCL}}(T_{0},P_{0})$ from above. This completes the proof of Theorem
\ref{Theorem_Entropy_Variation}.
\end{proof}

We have shown $T_{0},P_{0}$ in $S(T_{0},P_{0},t)\equiv S(T(t),P(t),A(t))$ to
emphasize that the result is general during any relaxation. In the derivation,
we have only used the second law. Being a general result, it should be valid
for any real glass. Above $T_{0\text{g}}$, the system is always in equilibrium
with the medium so its temperature is the same as $T_{0}$. Below
$T_{0\text{g}}$, when the system is not in equilibrium with the medium, then
$T(t)>T_{0}$ in accordance with Eq. (\ref{Temperature_Behavior}) based on the
experimental observation. Any theory, such as the one proposed in
\cite{Reiss,Gupta-JNCS,Kivelson,Gupta1} and known as the \emph{entropy loss
view} of the glass transition, in which $S(T_{0},P_{0},t)$ drops below
$S_{\text{SCL}}(T_{0},P_{0})$ so that
\begin{equation}
S(T_{0},t)\leq S_{\text{SCL}}(T_{0}). \label{EntropyLossView}%
\end{equation}
In this case, during relaxation, $dS(t)>0$ so that $(T_{0}-T(t))dS(t)<0$ in
direct conflict with Eq. (\ref{Irreversible-entropy-contributions1}), a
consequence of the second law. Such a theory then violates the second law as
first pointed out by Goldstein \cite{Goldstein}; we will revisit this issue in
the final section.

We now prove the following theorem:

\begin{theorem}
\label{Theorem_Enthalpy_Variation}For a glass, we must have $H(T_{0}%
,P_{0},t)>H_{\text{SCL}}(T_{0},P_{0})$ at all $T_{0}<$ $T_{0\text{g}}$, where
$S>S_{\text{SCL}}.$\ 
\end{theorem}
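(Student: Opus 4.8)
The plan is to reduce the enthalpy comparison to the already-established entropy comparison of Theorem~\ref{Theorem_Entropy_Variation} together with a Gibbs-free-energy inequality. The key algebraic observation is that, since $H=E+P_{0}V$ and $G=E-T_{0}S+P_{0}V$ are built from the \emph{same} external parameters, one has the exact identity $H=G+T_{0}S$. Evaluating this for the glass and for the supercooled liquid at the common $(T_{0},P_{0})$ and subtracting gives
\[
H-H_{\text{SCL}}=(G-G_{\text{SCL}})+T_{0}\,(S-S_{\text{SCL}}).
\]
Both terms on the right will be shown to be strictly positive for a glass, which yields the claim at once.

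First I would handle the entropy term: Theorem~\ref{Theorem_Entropy_Variation} already gives $S>S_{\text{SCL}}$ for the NEQ glass, so $T_{0}(S-S_{\text{SCL}})>0$ since $T_{0}>0$. Next I would establish $G>G_{\text{SCL}}$ from the Gibbs variation Eq.~(\ref{Gibbs_Variation}): at fixed $T_{0},P_{0}$ we have $dG=-T_{0}\,d_{\text{i}}S\le 0$, with the strict inequality $dG<0$ holding for the irreversible relaxation of a glass since $d_{\text{i}}S>0$. Thus $G$ decreases monotonically along the isothermal-isobaric relaxation path and attains its minimum at the equilibrium supercooled liquid; consequently the NEQ glass sits strictly above this minimum, $G>G_{\text{SCL}}$. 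Combining the two positive contributions gives $H>H_{\text{SCL}}$.

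As an independent check — and a route that parallels Theorem~\ref{Theorem_Entropy_Variation} directly — I would argue from Eq.~(\ref{Enthalpy_variation-Isobaric}), namely $dH=d_{\text{e}}Q=T_{0}\,d_{\text{e}}S$ at fixed $P_{0}$. Writing $d_{\text{e}}S=dS-d_{\text{i}}S$ and using $dS<0$ from Eq.~(\ref{Entropy_variation}) together with $d_{\text{i}}S>0$ shows $d_{\text{e}}S<0$, hence $dH/dt<0$ during relaxation; since $H$ decreases monotonically toward $H_{\text{SCL}}$, it must exceed $H_{\text{SCL}}$ throughout the NEQ regime. The two arguments reinforce each other, the first being a state-function comparison and the second a path (monotonicity) statement.

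The main obstacle is the step $G>G_{\text{SCL}}$. One must be careful that the $G$ appearing in Eq.~(\ref{Gibbs_Variation}) is constructed from the \emph{external} field $T_{0}$ and not from the internal temperature $T(t)>T_{0}$, so that the glass and the liquid are compared against a common reference field; and one must confirm that the relaxation path genuinely terminates at the equilibrium SCL state that minimizes $G$ at the prescribed $(T_{0},P_{0})$. Once the monotone decrease of $G$ to its equilibrium minimum is granted from the second law, the remainder is the elementary sign bookkeeping above.
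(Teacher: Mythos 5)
Your proposal is correct, and it actually contains two arguments. Your ``independent check'' is essentially the paper's own proof: the paper uses Eq.~(\ref{Enthalpy_variation-Isobaric}), $dH=d_{\text{e}}Q$, and deduces $d_{\text{e}}S<0$ directly from Eq.~(\ref{Irreversible-entropy-contributions2}) combined with $T(t)>T_{0}$ from Eq.~(\ref{Temperature_Behavior}), whereas you reach the same sign via $d_{\text{e}}S=dS-d_{\text{i}}S$ with $dS<0$ from Eq.~(\ref{Entropy_variation}) and $d_{\text{i}}S>0$; both are legitimate within the IEQ framework, and both then conclude $dH/dt\leq0$ and monotone approach to $H_{\text{SCL}}$. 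Your primary route is genuinely different in form: the exact decomposition $H=G+T_{0}S$ (valid because both potentials are built from the external fields $T_{0},P_{0}$, and for SCL $T=T_{0}$, $P=P_{0}$ so $H_{\text{SCL}}=G_{\text{SCL}}+T_{0}S_{\text{SCL}}$ as well) reduces the claim to $S>S_{\text{SCL}}$ from Theorem~\ref{Theorem_Entropy_Variation} plus $G>G_{\text{SCL}}$ from Eq.~(\ref{Gibbs_Variation}). This buys a cleaner separation of the two second-law inputs (entropy excess and free-energy excess) and makes explicit how the theorem's hypothesis $S>S_{\text{SCL}}$ enters, which the paper's proof never actually invokes. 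One small caveat: your characterization of the first route as a pure ``state-function comparison'' is not quite accurate, since the step $G>G_{\text{SCL}}$ is itself established by monotone decrease of $G$ along the relaxation path terminating at SCL --- so both of your arguments, like the paper's, ultimately rest on path monotonicity plus identification of the equilibrium endpoint. This does not affect correctness.
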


\begin{proof}
According to Eqs. (\ref{Temperature_Behavior}) and
(\ref{Irreversible-entropy-contributions2}), we conclude that $d_{\text{e}%
}Q=T_{0}$ $d_{\text{e}}S<0$ [cf. Eq. (\ref{Heat_Capacity_Relation})] while
relaxation is going on and vanishes as $T(t)\rightarrow T_{0}^{+}$. It then
follows from Eq. (\ref{Enthalpy_variation-Isobaric}) that
\begin{equation}
\frac{dH(t)}{dt}\leq0, \label{Total_Entropy_Rate_1}%
\end{equation}
a result that is consistent with experimental observations
\cite{Goldstein-Ann}. This completes the proof of the theorem.
\end{proof}

It follows from the behavior of the Gibbs free energy $G(t)=H(t)-T_{0}S(t)$
during relaxation ($dG(t)/dt\leq0$) that $dH\leq T_{0}$ $dS$, \textit{i.e.},
\begin{subequations}
\begin{equation}
\left\vert \frac{dH(t)}{dt}\right\vert \geq T_{0}\left\vert \frac{dS(t)}%
{dt}\right\vert \label{H-S-Bound0}%
\end{equation}
and%
\begin{align}
\left\vert \Delta H(T)\right\vert  & \doteq H(T_{0})-H_{\text{SCL}}%
(T_{0})\nonumber\\
& \geq T_{0}[S(T_{0})-S_{\text{SCL}}(T_{0})];T_{0}<T_{0\text{g}}%
.\label{H-S-Bound}%
\end{align}
The equality holds at $T_{0}=T_{0\text{g}}$. We can also obtain Eq.
(\ref{H-S-Bound0}) using $dH=T_{0}$ $d_{\text{e}}S\leq T_{0}$ $dS$.

From Eqs. (\ref{Enthalpy_variation0}) and
(\ref{Irreversible-entropy-contributions}), we also have
\end{subequations}
\begin{equation}
\left\vert \frac{dH(t)}{dt}\right\vert \geq T(t)\left\vert \frac{dS(t)}%
{dt}\right\vert . \label{H-S-Bound1}%
\end{equation}
The last bound is tighter than the bound in Eq. (\ref{H-S-Bound0}) and reduces
to the equality obtained earlier \cite{Gujrati-I} where $\boldsymbol{\xi}$ was
neglected. This equality there was used to infer Eq. (\ref{Entropy_variation}%
). We have just established that the conclusion remains unaltered even if we
consider internal variables.

In summary, the isothermal relaxation originates from the tendency of the
glass to come to thermal equilibrium during which its temperature $T(t)$
approaches $T_{0}$ from above in time. The relaxation process results in the
lowering of the corresponding Gibbs free energy in time, as expected due to
the second law. But it also results in the lowering of the corresponding
entropy as shown in Fig. \ref{Fig_entropyglass}, and the enthalpy during
vitrification; the latter is observed experimentally \cite{Goldstein-Ann}.

\section{Temperature Disparity due to Fast and Slow Variables:
Tool-Narayanaswamy Equation\label{Sec-Fictive}}

We have shown that for a given $\tau_{\text{obs}}$, we can partition
$\boldsymbol{\xi}$ into two distinct groups: one containing internal variable
$\boldsymbol{\xi}_{\text{E}}$ whose affinity has vanished and the other one,
which we now denote by $\boldsymbol{\xi}_{\text{N}}$ that has not equilibrated
and has a nonzero affinity $\mathbf{A}$. It is the active internal variables.
As $\boldsymbol{\xi}_{\text{E}}$ has equilibrated, its temperature, pressure,
etc. must be those of the medium, that is, $T_{0},P_{0}$, etc. It is the
inactive internal variable. On the other hand, the temperature, pressure, etc.
associated with different components of $\boldsymbol{\xi}_{\text{N}}$ must not
be those of the medium as there will be nonzero thermodynamic forces to bring
each to equilibrium in due course. This raises a very interesting question.
Because we are dealing with an IEQ state of the system, there is a
well-defined and unique thermodynamic definition of its temperature
$T(t)\doteq\partial E(t)/\partial S(t)$. This temperature also satisfies the
identity $dQ(t)=T(t)dS(t)$. How does $T(t)$ relate to \ temperatures of
$\boldsymbol{\xi}_{\text{E}}$ and $\boldsymbol{\xi}_{\text{N}}$? To make some
progress, we assume $\boldsymbol{\xi}_{\text{E}}$ and $\boldsymbol{\xi
}_{\text{N}}$ to be quasi-independent over $\tau_{\text{obs}}$. There is a
strong experimental evidence for this \cite{Debenedetti-Stillinger,Richert}.
However, there are observables in $\mathbf{X}$ that also participate in
relaxation. For example, $V$ will relax if $P\neq P_{0}$. Similarly, $E$ will
relax if $T\neq T_{0}$. As we have discussed earlier \cite{Gujrati-I}, one can
treat $E,V,$ etc. in $\mathbf{X}$ as internal variables with their affinities
$1/T-1/T_{0},P/T-P_{0}/T_{0}$, etc. that vanish once equilibrium is reached.
This is also seen from Eq. (\ref{Total_Entropy_Rate}), where the first two
terms have the same form as the last term involving $d\boldsymbol{\xi}$;
recall that $dS_{0}/dt=d_{\text{i}}S/dt$. Therefore, in this section, we will
continue to include $\boldsymbol{\xi}_{0}=\mathbf{X}$ in $\boldsymbol{\xi}$ as
we had done in Sec. \ref{Sec-Hierarchy}. This should not cause any confusion.
We only have to be careful to \emph{always} include $\boldsymbol{\xi}%
_{0}=\mathbf{X}$ to specify the system even when $\tau_{\text{obs}}>$
$\tau_{\text{eq}}=\tau_{0}$.

\subsection{A Black Box Model}

We consider a simple NEQ\ laboratory problem to model the above situation.
Consider a system as a \textquotedblleft black box\textquotedblright%
\ consisting of two parts at different temperatures $T_{1}$ and $T_{2}>T_{1}$,
but insulated from each other so that they cannot come to equilibrium. The two
parts are like slow and fast motions in a glass or $\boldsymbol{\xi}%
_{\text{E}}$ and $\boldsymbol{\xi}_{\text{N}}$, and the insulation allows us
to treat them as independent, having different temperatures. We assume that
there are no irreversible processes that go on within each part so that there
is no irreversible heat $d_{\text{i}}Q_{1}$ and $d_{\text{i}}Q_{2}$ generated
within each part. We wish to identify the temperature of the system, the black
box. To do so, we imagine that each part is added a certain
\emph{infinitesimal} amount of heat from outside, which we denote by
$dQ_{1}=d_{\text{e}}Q_{1}$ and $dQ_{2}=d_{\text{e}}Q_{2}$. We assume the
entropy changes to be $dS_{1}$ and $dS_{2}$. Then, we have for the net heat
and entropy change
\[
dQ=dQ_{1}+dQ_{2},dS=dS_{1}+dS_{2}.
\]
We introduce the temperature $T$ by $dQ=TdS$. This makes it a thermodynamic
temperature of the black box; see Eq. (\ref{Def-dQ}). Using $dQ_{1}%
=T_{1}dS_{1},dQ_{2}=T_{2}dS_{2}$, we immediately find

\qquad\qquad\qquad%
\[
dQ(1/T-1/T_{2})=dQ_{1}(1/T_{1}-1/T_{2}).
\]
By introducing $x=dQ_{1}/dQ$, which is determined by the setup, we find that
$T$ is given by%
\begin{equation}
\frac{1}{T}=\frac{x}{T_{1}}+\frac{1-x}{T_{2}}. \label{T_eff}%
\end{equation}
As $x$ is between $0$ and $1$, it is clear that $T$ lies between $T_{1}$ and
$T_{2}$ depending on the value of $x$. Thus, we see from this heuristic model
calculation that the thermodynamic temperature $T$ of the system is not the
same as the temperature of either parts, a common property of a system not in equilibrium.

If the insulation between the parts is not perfect, there is going to be some
energy transfer between the two parts, which would result in maximizing the
entropy of the system. As a consequence, their temperatures will eventually
become the same. During this period, $T$ will also change until all the three
temperatures become equal.

\subsection{Tool-Narayanaswamy Equation}

We turn to the general case of relaxation of thermodynamic properties. At high
enough temperatures, the time variation of $T(t)$ as it relaxes towards
$T_{0}$ can be described as a single simple exponential with a characteristic
time scale $\tau_{\text{eq}}$. This happens when all internal variables have
come to equilibrium during $\tau_{\text{obs}}>$ $\tau_{\text{eq}}$ so no
internal variables besides $\boldsymbol{\xi}_{0}$ are needed, a case discussed
by Landau and Lifshitz \cite{Landau} and by Wilks \cite{Wilks}.

At low temperatures, this is not true. There are \emph{quasi-independent} slow
and fast internal variables $\boldsymbol{\xi}_{\text{N}}$ and $\boldsymbol{\xi
}_{\text{E}}$ that are well known in glasses and supercooled liquids
\cite{Debenedetti-Stillinger,Richert}. The situation is similar to the black
box considered above. Both parts will strive to come to equilibrium with the
medium but they have widely separated relaxation times. As time goes on during
relaxation, some of the groups in $\left\{  \boldsymbol{\xi}_{\text{n}%
}\right\}  $ introduced in Sec. \ref{Sec-Hierarchy} becomes part of
$\boldsymbol{\xi}_{\text{E}}$ after equilibration as we have discussed there.
We first assume, for simplicity, that all active internal variables in
$\boldsymbol{\xi}_{\text{N}}$ have the same relaxation time $\tau_{1}$, i.e.,
they equilibrate together but have not equilibrated. The quasi-independence of
$\boldsymbol{\xi}_{\text{N}}$ and $\boldsymbol{\xi}_{\text{E}}$ immediately
leads to the following partition of the $S,E,V$ and $\boldsymbol{\xi}$ into
two contributions, one from each kind:
\begin{equation}
\mathbf{Z}(t)=\mathbf{Z}_{\text{E}}(t)+\mathbf{Z}_{\text{N}}(t).
\label{Partitions_S}%
\end{equation}
For example, quasi-independence gives the additivity $S(t)=S_{\text{E}%
}(t)+S_{\text{N}}(t)$, where $S_{\text{E}}(t)$\ and $S_{\text{N}}(t)$\ stand
for $S_{\text{E}}(E_{\text{E}}(t),V_{\text{E}}(t),\boldsymbol{\xi}_{\text{E}%
}(t))$ and $S_{\text{N}}(E_{\text{N}}(t),V_{\text{N}}(t),\boldsymbol{\xi
}_{\text{N}}(t))$, etc. Here, we have introduced $V_{\text{E}}(t)$ as the
volume difference $V-V_{\text{f}}\left(  t\right)  $ in terms of the free
volume $V_{\text{f}}\left(  t\right)  $ in the cell model in which
$V_{\text{f}}\left(  t\right)  $ allows for the molecules to move long
distances (liquid-like slow motion) over $\tau_{\text{obs}}$
\cite{Gujrati-GlassEncyclopedia}. Thus, $V_{\text{E}}$ corresponds to the fast
center of mass solid-like motion within the cells, which are in equilibrium
with the medium; see also Zallen \cite{Zallen}.

Let us now introduce the "energy fraction" $x(t)$ as%
\begin{equation}
x(t)\equiv dE_{\text{N}}(t)/dE(t),\ 1-x(t)\equiv dE_{\text{E}}%
(t)/dE(t),\ \label{x_definition}%
\end{equation}
at a given $t$, so that
\begin{align}
\partial S_{\text{N}}(t)/\partial E(t)  & =x(t)\partial S_{\text{N}%
}(t)/\partial E_{\text{N}}(t),\ \nonumber\\
\partial S_{\text{E}}(t)/\partial E(t)  & =[1-x(t)]\partial S_{\text{E}%
}(t)/\partial E_{\text{E}}(t).\label{x_T_derivatives}%
\end{align}

By definition, we have $\partial S_{\text{E}}(t)/\partial E_{\text{E}%
}(t)=1/T_{0}$, while $\boldsymbol{\xi}_{\text{N}}$ will have a temperature
different from this. Assuming internal equilibrium, we can introduce a new
temperature $T_{\text{N}}(t)$ by%
\begin{equation}
\partial S_{\text{N}}(t)/\partial E_{\text{N}}(t)=1/T_{\text{N}}(t).
\label{Fictive_Temp}%
\end{equation}
The following identity%
\begin{equation}
\frac{1}{T(t)}=\frac{1-x(t)}{T_{0}}+\frac{x(t)}{T_{\text{N}}(t)}
\label{Narayanaswamy_Decomposition}%
\end{equation}
easily follows from considering $\partial S(t)/\partial E(t)$ and using Eq.
(\ref{Partitions_S}) for $S(t)$ and Eq. (\ref{x_T_derivatives}). This equation
should be compared with (\ref{T_eff}) obtained above using a black box model
and is identical to the Tool-Narayanaswamy equation \cite{Goldstein-Ann} in
form, except that we have given thermodynamic definitions of $x(t)$ in
(\ref{x_definition}) and $T_{\text{N}}(t)$ in Eq. (\ref{Fictive_Temp}).

It is easy to extend the above calculation to the case of different groups
$\left\{  \boldsymbol{\xi}_{n}\right\}  $ belonging to $\boldsymbol{\xi
}_{\text{N}}$. The quasi-independence gives
\begin{equation}
\mathbf{Z}(t)=\mathbf{Z}_{\text{E}}(t)+%
{\textstyle\sum\nolimits_{n}}
\mathbf{Z}_{n}(t), \label{Partitions_S-general}%
\end{equation}
so that $S(t)=S_{\text{E}}(t)+%
{\textstyle\sum\nolimits_{n}}
S_{n}(t)$ with $S_{\text{E}}(E_{\text{E}}(t),V_{\text{E}}(t),\boldsymbol{\xi
}_{\text{E}}(t))$ and $S_{n}(E_{n}(t),V_{n}(t),\boldsymbol{\xi}_{\not n
}(t))$ as discussed above. For each $S_{n}$, we have its own temperature
$T_{n}$ using. It is now easy to see that Eq.
(\ref{Narayanaswamy_Decomposition}) is extended to
\begin{equation}
\frac{1}{T(t)}=\frac{1-x(t)}{T_{0}}+\sum_{n}\frac{x_{n}(t)}{T_{n}(t)},
\label{Narayanaswamy_Decomposition-General}%
\end{equation}
with $x_{n}(t)\equiv dE_{n}(t)/dE(t)$ and $1-x(t)=\sum_{n}x_{n}(t)$.

Let us now understand the significance of the above analysis. The partition in
Eqs. (\ref{Partitions_S}) and (\ref{Partitions_S-general}) along with the
fractions $x(t)$ and $x_{n}(t)$ shows that the partition satisfies a lever
rule: the relaxing glass can be \emph{conceptually} (but not physically)
thought of as a "mixture" consisting of different "parts" corresponding to
different temperatures and fractions. However, one of the temperatures is
$T_{0}$ of the medium, while $T_{n}(t)$'s denote the temperature of the parts
that are not equilibrated yet. As some of these parts equilibrate, their
temperature become $T_{0}$ and they add to the weight $1-x(t)$ for the
equilibrated internal variables. Thus, we see that while $\boldsymbol{\xi
}_{\text{E}}(t)$ may play no role in the IEQ thermodynamics, it still plays an
important role in relating the thermodynamic temperature $T(t)$ with those of
various groups of $\boldsymbol{\xi}(t)$. Thinking of a system conceptually as
a "mixture" of "parts" is quite common in theoretical physics. One common
example is that of a superfluid, which can be thought of as a "mixture" of a
normal viscous "component" and a superfluid "component". In reality, there
exist two simultaneous motions, one of which is "normal" and the other one is
"superfluid". A similar division can also be carried out in a superconductor:
the total current is a sum of a "normal current" and a "superconducting current".

Such an analysis has been carried out in detail earlier \cite{Gujrati-I},
where a connection is made with the notion of the "fictive" temperature
\cite{Goldstein-Ann} but in the absence of any internal variables (besides
$\boldsymbol{\xi}_{0}$). Here, we will summarize that discussion and refer the
reader to this work for missing details. It is easy to first consider the
simple case in Eq. (\ref{Narayanaswamy_Decomposition}). One can consider the
part $\boldsymbol{\xi}_{\text{N}}$ of the energy fraction $x(t)$\ at
$T_{\text{N}}$ to represent a "fictitious" SCL at temperature $T_{\text{N}}$.
It is fictitious since the entire system does not consist of this part so it
is not in equilibrium as SCL\ is supposed to be; it is missing the part
corresponding to the fraction $1-x(t)$. We can supplement mentally the
fictitous SCL by the same SCL of fraction $x(t)$ at the same temperature
$T_{\text{N}}$ to ensure that the entire system consists of $\boldsymbol{\xi
}_{\text{N}}$ at $T_{\text{N}}$. This now represents an IEQ state at
$T_{\text{N}}$, the left side of Eq. (\ref{Narayanaswamy_Decomposition}).
Thus, $T_{\text{N}}$ represents the thermodynamic temperature of this IEQ
state, which can then be treated as an "unequilibrated" SCL, in thermal
equilibrium with a medium at $T_{\text{N}}$ (but not at $T_{0}$). We have
identified it as an "unequilibrated" SCL since there is no reason for
$\mathbf{A}_{\text{N}}$ corresponding to $\boldsymbol{\xi}_{\text{N}}$ to
vanish in this SCL, whereas it is required to vanish in equilibrium. This
SCL\ at $T_{\text{N}}$\ is also not identical to the glass as the latter has
$\boldsymbol{\xi}_{\text{E}}(t)$ at $T_{0}$, which is absent in this SCL. We
can thus justify $\not T  _{\text{N}}$ as the fictive temperature.

This picture can be extended to Eq. (\ref{Narayanaswamy_Decomposition-General}%
) by introducing $T_{\text{N}}$ as follows:%
\[
\frac{x(t)}{T_{\text{N}}}=\sum_{n}\frac{x_{n}(t)}{T_{n}(t)},
\]
which converts it to Eq. (\ref{Narayanaswamy_Decomposition}). We can then
introduce an equilibrated SCL, in equilibrium with a medium at $T_{\text{N}}$
so that we can treat $T_{\text{N}}$ as the fictive temperature.

Instead of considering a derivative of $S$ with $E$, we can consider
derivatives with respect to other state variables such as $V$. In that case, a
similar analysis can be carried out as done in \cite{Gujrati-I} to obtain a
similar looking Tool-Narayanaswamy equation for $P(t)/T(t)$. We leave it to
the reader to carry out this simple extension. The result for $P=P_{0}$ is
given in \cite{Gujrati-I}.

\section{Discussions and Conclusions\label{Sec-Conclusions}}

\subsection{Consequence of the Relaxation Hierarchy}

We have presented a hierarchical classification of relaxation times in
increasing order in Eq. (\ref{RelaxationTime-Hierarchy}), which allows us to
determine a unique temporal window $\Delta t_{n}$ in Eq.
(\ref{ObservationTime-Hierarchy}) for a given $\tau_{\text{obs}}$ as shown by
the two neighboring relaxation curves around red the horizontal line at the
temperature $T_{0}$ of interest in Fig. \ref{Fig-Relaxation}. The discussion
is valid for any relaxing system with complex relaxation and is not restricted
to only SCL/glass undergoing vitrification. The temporal window is not fixed
as the state of the system changes so it must be adjusted appropriately; see
Fig. \ref{Fig-Relaxation}. Let us consider vitrification considered in Sec.
\ref{Sec-Vitrification}. Above $T_{0}=T_{0\text{g}}$, the system is always in
equilibrium (recall that we have used SCL as the equilibrium state) so
$\tau_{\text{obs}}\geq\tau_{0}$; see Eq. (\ref{ObservationTime-Hierarchy-0}).
There are no active internal variables. Therefore, the system's temperature
$T=T_{0}$. Slightly below $T_{0\text{g}}$ but above $T_{01}$, Eq.
(\ref{ObservationTime-Hierarchy-10}) is satisfied so $\boldsymbol{\xi}%
_{0}=\mathbf{X}$ is active, but all $\boldsymbol{\xi}_{k},k\geq1$ are inactive
so they need not be considered for a thermodynamic description. There are two
different contributions that affect the temporal window that needs to be considered:

\begin{enumerate}
\item[(i)] Cooling effect-As we lower $T_{0}$ from its previous value
$T_{0}^{\prime}=T_{0\text{g}}$ ($\tau_{\text{obs}}=\tau_{0}$), the system's
initial temperature is $T(0)=T_{0}^{\prime}$. As the system's temperature
determines $\tau_{1}$, it has the previous value $\tau_{1}^{\prime}$ at
$T_{0}^{\prime}$ initially so it lies below the curve $\tau_{1}$ at $T_{0}$.
But the value of $\tau_{0}$ at $T_{0}$ is determined by the new temperature
$T_{0}$ so it increases compared to $\tau_{0}^{\prime}=\tau_{0}(T_{0}^{\prime
})$. Consequently, we have $\tau_{\text{obs}}<\tau_{0}$ to satisfy Eq.
(\ref{ObservationTime-Hierarchy-10}).

\item[(ii)] Relaxation effect- During isothermal relaxation at the new
temperature, $T(t)$ decreases towards $T_{0}$, which increases $\tau_{1}$ from
$\tau_{1}^{\prime}=\tau_{1}(T_{0}^{\prime})$ to $\tau_{1}(T_{0})$. This
shrinks the window $\Delta t_{0}$ in Eq. (\ref{ObservationTime-Hierarchy-10})
in width to the width shown in Fig. \ref{Fig-Relaxation} at $T_{0}$.
\end{enumerate}

The discussion can be now applied to the sequence of cooling steps to $T_{0}$
between $T_{01}$ and $T_{02}$, between $T_{02}$ and $T_{03}$, etc. where we
are confronted with the new successive windows $\Delta t_{1},\Delta
t_{2},\cdots$. In each window, we need to consider newer internal variables
$\boldsymbol{\xi}_{1},\boldsymbol{\xi}_{2},\cdots$ so that in a window $\Delta
t_{n}$, we need to consider $\mathbf{Z}_{n}$\ consisting of $\mathbf{X}%
$\textbf{ }and $\boldsymbol{\xi}^{(n)}$ as discussed in Sec.
\ref{Sec-Hierarchy}. We thus conclude that the dimension of the state space
continues to grow during cooling until all internal variables (presumably
leaving out $\boldsymbol{\xi}_{\text{v}}$ that refers to local vibrations as
noted earlier) become active. Thus, in the glass transition region between
$T_{0\text{G}}$ and $T_{0\text{g}}$, the irreversibility continues to grow
until all internal variables become active.

\subsection{Residual Entropy}

As discussed above, we cannot just consider a fixed, small number of internal
variables (their number keeps changing in the transition region) if we want to
go to some small enough temperatures $T_{0}<T_{0\text{G}}$ and be able to
describe the cooling process thermodynamically. The best we can do is to
determine a large enough numbers of the internal variables that become active
in the transition region. This requires a deeper understanding of the
structure of glasses and identify these internal variables, which seems to be
an impossible task at present. In our view, this remains an unsolved problem
at present. Despite this, the inequalities in Eqs.
(\ref{ResidualEntropy_Bound}), (\ref{Entropy_bound_at_0}) and
(\ref{Entropy_bound}) remain valid for any choice of $\mathbf{Z}$.

As these inequalities are very important, we summarize them for the benefit of
the reader. According to Eq. (\ref{ResidualEntropy_Bound}), the residual
entropy $S_{\text{R}}$ cannot be less than the experimentally measured or
extrapolated $S_{\text{expt}}(0)$ at absolute zero; the latter itself cannot
be less than the entropy of the supercooled liquid at absolute zero. As we
have assumed $S_{\text{SCL}}(0)=0$, we claim the \emph{strict inequality}
\[
S_{\text{R}}>0.
\]
Indeed, the strict inequality between $S_{\text{expt}}(0)$ and $S_{\text{SCL}%
}(0)$ holds at all positive temperatures $T_{0}<T_{0\text{g}}$ as derived in
Eq. (\ref{Entropy_bound_at_0}).

We have not discussed the statistical formulation of the residual entropy,
which has been discussed by us in \cite[See Sec. 4.3.3]{Gujrati-Symmetry} and
\cite[See Sec. 7]{Gujrati-Entropy2}. The derivation does not require the use
of the second law or entropy maximization. Therefore, it applies to any
nonequilibrium state and is purely combinatorial in nature. For the sake of
completeness, we summarize the result. Let $\Gamma_{\lambda},\lambda
=1,2,\cdots,\mathcal{C}$ denote the number of disjoint components in the state
space, and let $p_{k_{\lambda}}$ denote the probability of a microstate
$k_{\lambda}$ in $\Gamma_{\lambda}$. The entropy $S=-%
{\textstyle\sum\nolimits_{\lambda}}
\sum_{k_{\lambda}}p_{k_{\lambda}}\ln p_{k_{\lambda}},$ $\sum_{\lambda}%
\sum_{k_{\lambda}}p_{k_{\lambda}}=1$, can be written as a sum of two parts:%
\[
S=%
{\textstyle\sum\nolimits_{\lambda}}
p_{\lambda}S_{\lambda}+S_{\mathcal{C}},
\]
where $p_{\lambda}\doteq\sum_{k_{\lambda}}p_{k_{\lambda}}$ is the probability
of the component $\Gamma_{\lambda}$, and $S_{\lambda}=-\sum_{k_{\lambda}%
}\widehat{p}_{k_{\lambda}}\ln\widehat{p}_{k_{\lambda}},\widehat{p}%
_{k_{\lambda}}\doteq p_{k_{\lambda}}/p_{\lambda}$, is the entropy of the
component, and
\begin{equation}
S_{\mathcal{C}}\doteq-%
{\textstyle\sum\nolimits_{\lambda}}
p_{\lambda}\ln p_{\lambda} \label{Component-ConfinementEntropy}%
\end{equation}
is the component confinement entropy. The residual entropy is the component
confinement entropy $S_{\mathcal{C}}$ at absolute zero, with $p_{\lambda
}=p_{\lambda0}$ denoting the probability of the component $\Gamma_{\lambda}$
at absolute zero. We have not imposed any \emph{equally probable assumption}
in the above derivation so the result is very general. However, to apply IEQ
thermodynamics, we need to impose equally probable assumption.

\subsection{Fate of the Entropy Loss Conjecture}

The isothermal relaxation considered in Sec. \ref{Sect_Relaxation} shows that
both $S$ and $H$ decrease with time, which is consistent with our intuitive
picture given at the start of that section for $S$, and experimental evidence
for $H$. As we have shown, the behavior is a consequence of the second law.
The entropy loss view (ELV) mentioned after Theorem
\ref{Theorem_Entropy_Variation} and proposed in
\cite{Reiss,Gupta-JNCS,Kivelson,Gupta1} results in the conclusion that
contradict our results. In particular, the view suggests that during
relaxation, the entropy increase since $S(T_{0},t)\leq S_{\text{SCL}}(T_{0})$;
see Eq. (\ref{EntropyLossView}). As Goldstein \cite{Goldstein} has shown, this
is a violation of the second law. These authors agree that in their view of
the glass transition, the glasses do violate the second law, while others
\cite{Jackel,Gutzow-Schmelzer,Nemilov,Johari,Conradt,Gujrati-Residual} argue
in favor of the second law. For most scientists, the fact that the entropy
loss view violates the second law should be a strong indication that the view
is unrealistic. But the debate persists as is evident from some recent reviews
\cite{Schmelzer,Gupta0,Mauro0,Takada0,Nemilov0}.

Here, we hope to settle the debate by pointing out a hitherto unrecognized
\emph{internal inconsistency} of the ELV, assuming its premise that the
glasses do violate the second law. In other words, the second law is not the
absolute truth of Nature. This means that all the inequalities in Eq.
(\ref{Irreversible-entropy-contributions}) must be reversed for the view to
hold. Since $dS>0$ in ELV during relaxation, it follows from the reverse
inequality in Eq. (\ref{Irreversible-entropy-contributions1}) that
$T(t)>T_{0}$, which is the same as Eq. (\ref{Temperature_Behavior}). From the
reverse inequality in Eq. (\ref{Irreversible-entropy-contributions2}), we
conclude that%
\begin{equation}
\lbrack T_{0}-T(t)]dH(t)<0. \label{ELV-Consequence0}%
\end{equation}
If we demand that the ELV follow the experimental evidence ($dH(t)/dt<0$), we
must conclude that $T_{0}>T(t)$, which contradicts the previous conclusion and
the ELV\ becomes internally inconsistent. If, however, we accept the previous
conclusion $T(t)>T_{0}$ to ensure that the ELV remain internally consistent,
then $dH(t)/dt>0$ in contradiction with experimental evidence. Thus, the mere
fact that the ELV satisfies the experimental evidence ($dH(t)/dt<0$) does not
mean that it is internally consistent in the entropy loss view. In other
words, demanding that the ELV is consistent with experiments disproves the ELV
conjecture. Even though we have considered the entropy loss view at different
times
\cite{Gujrati-Residual,Gujrati-Entropy2,Gujrati-Relaxation1,Gujrati-Relaxation2,Gujrati-Relaxation3,Gujrati-Relaxation4,Gujrati-Relaxation5}%
, we believe that the conclusion drawn above is the most direct demonstration
of the internal inconsistency of the ELV, despite the fact that we have
allowed it to contradict the second law.

\subsection{Significance of Inactive Internal variables}

Even though the IEQ thermodynamics only involves the active internal
variables, it is clear from Sec. \ref{Sec-Fictive} that even inactive internal
variables such as $\tau_{\text{r}}$ indirectly affect the thermodynamics
through the determination of the temperature, pressure, etc. of the system. In
retrospect, this is not so surprising once we recognize that the temperature
of the system is a thermodynamic quantity. However, the division of the
internal variables into active and inactive parts mean that the temperature of
the system must be different from the temperature $T_{0}$ during isothermal relaxation.

\begin{acknowledgments}
I wish to acknowledge useful comments from Gyan P. Johari and Sergei V.
Nemilov over the years.
\end{acknowledgments}

\end{document}